\newcommand{\oprocendsymbol}{\hbox{$\bullet$}}
\newcommand{\oprocend}{\relax\ifmmode\else\unskip\hfill\fi\oprocendsymbol}
\renewcommand{\cdots}{\dots}
\newcommand{\until}[1]{\{1,\dots,#1\}}
\newcommand{\longthmtitle}[1]{\mbox{}\textit{(#1).}}
\DeclareMathOperator\E{\mathbb{E}}
\newtheorem{theorem}{Theorem}
\newtheorem{example}{Example}
\newtheorem{remark}{Remark}
\newtheorem{lemma}{Lemma}
\newtheorem{proposition}{Proposition}
\newtheorem{corollary}{Corollary}
\newtheorem{definition}{Definition}
\chardef\@x10\chardef\@xv60
\def\tcitime{
\def\@time{%
  \@minute\time\@hour\@minute\divide\@hour\@xv
  \ifnum\@hour<\@x 0\fi\the\@hour:%
  \multiply\@hour\@xv\advance\@minute-\@hour
  \ifnum\@minute<\@x 0\fi\the\@minute
  }}%
\def\QCTOpt[#1]#2{%
  \def\QCTOptB{#1}
  \def\QCTOptA{#2}
}
\def\QCTNOpt#1{%
  \def\QCTOptA{#1}
  \let\QCTOptB\empty
}
\def\Qct{%
  \@ifnextchar[{%
    \QCTOpt}{\QCTNOpt}
}
\def\QCBOpt[#1]#2{%
  \def\QCBOptB{#1}
  \def\QCBOptA{#2}
}
\def\QCBNOpt#1{%
  \def\QCBOptA{#1}
  \let\QCBOptB\empty
}
\def\Qcb{%
  \@ifnextchar[{%
    \QCBOpt}{\QCBNOpt}
}
\def\PrepCapArgs{%
  \ifx\QCBOptA\empty
    \ifx\QCTOptA\empty
      {}%
    \else
      \ifx\QCTOptB\empty
        {\QCTOptA}%
      \else
        [\QCTOptB]{\QCTOptA}%
      \fi
    \fi
  \else
    \ifx\QCBOptA\empty
      {}%
    \else
      \ifx\QCBOptB\empty
        {\QCBOptA}%
      \else
        [\QCBOptB]{\QCBOptA}%
      \fi
    \fi
  \fi
}
\def\GRAPHICSPS#1{%
 \ifcase\GRAPHICSTYPE
   \special{ps: #1}%
 \or
   \special{language "PS", include "#1"}%
 \fi
}%
\def\graffile#1#2#3#4{%
    \leavevmode
    \raise -#4 \BOXTHEFRAME{%
        \hbox to #2{\raise #3\hbox to #2{\null #1\hfil}}}%
}%
\def\draftbox#1#2#3#4{%
 \leavevmode\raise -#4 \hbox{%
  \frame{\rlap{\protect\tiny #1}\hbox to #2%
   {\vrule height#3 width\z@ depth\z@\hfil}%
  }%
 }%
}%
\newif\ifwasdraft
\def\GRAPHIC#1#2#3#4#5{%
 \ifnum\draft=\@ne\draftbox{#2}{#3}{#4}{#5}%
  \else\graffile{#1}{#3}{#4}{#5}%
  \fi
 }%
\def\addtoLaTeXparams#1{%
    \edef\LaTeXparams{\LaTeXparams #1}}%
\newif\ifBoxFrame \BoxFramefalse
\newif\ifOverFrame \OverFramefalse
\newif\ifUnderFrame \UnderFramefalse
\def\BOXTHEFRAME#1{%
   \hbox{%
      \ifBoxFrame
         \frame{#1}%
      \else
         {#1}%
      \fi
   }%
}
\def\doFRAMEparams#1{\BoxFramefalse\OverFramefalse\UnderFramefalse\readFRAMEparams#1\end}%
\def\readFRAMEparams#1{%
 \ifx#1\end%
  \let\next=\relax
  \else
  \ifx#1i\dispkind=\z@\fi
  \ifx#1d\dispkind=\@ne\fi
  \ifx#1f\dispkind=\tw@\fi
  \ifx#1t\addtoLaTeXparams{t}\fi
  \ifx#1b\addtoLaTeXparams{b}\fi
  \ifx#1p\addtoLaTeXparams{p}\fi
  \ifx#1h\addtoLaTeXparams{h}\fi
  \ifx#1X\BoxFrametrue\fi
  \ifx#1O\OverFrametrue\fi
  \ifx#1U\UnderFrametrue\fi
  \ifx#1w
    \ifnum\draft=1\wasdrafttrue\else\wasdraftfalse\fi
    \draft=\@ne
  \fi
  \let\next=\readFRAMEparams
  \fi
 \next
 }%
\def\IFRAME#1#2#3#4#5#6{%
      \bgroup
      \let\QCTOptA\empty
      \let\QCTOptB\empty
      \let\QCBOptA\empty
      \let\QCBOptB\empty
      #6%
      \parindent=0pt%
      \leftskip=0pt
      \rightskip=0pt
      \setbox0 = \hbox{\QCBOptA}%
      \@tempdima = #1\relax
      \ifOverFrame
          \typeout{This is not implemented yet}%
          \show\HELP
      \else
         \ifdim\wd0>\@tempdima
            \advance\@tempdima by \@tempdima
            \ifdim\wd0 >\@tempdima
               \textwidth=\@tempdima
               \setbox1 =\vbox{%
                  \noindent\hbox to \@tempdima{\hfill\GRAPHIC{#5}{#4}{#1}{#2}{#3}\hfill}\\%
                  \noindent\hbox to \@tempdima{\parbox[b]{\@tempdima}{\QCBOptA}}%
               }%
               \wd1=\@tempdima
            \else
               \textwidth=\wd0
               \setbox1 =\vbox{%
                 \noindent\hbox to \wd0{\hfill\GRAPHIC{#5}{#4}{#1}{#2}{#3}\hfill}\\%
                 \noindent\hbox{\QCBOptA}%
               }%
               \wd1=\wd0
            \fi
         \else
            \ifdim\wd0>0pt
              \hsize=\@tempdima
              \setbox1 =\vbox{%
                \unskip\GRAPHIC{#5}{#4}{#1}{#2}{0pt}%
                \break
                \unskip\hbox to \@tempdima{\hfill \QCBOptA\hfill}%
              }%
              \wd1=\@tempdima
           \else
              \hsize=\@tempdima
              \setbox1 =\vbox{%
                \unskip\GRAPHIC{#5}{#4}{#1}{#2}{0pt}%
              }%
              \wd1=\@tempdima
           \fi
         \fi
         \@tempdimb=\ht1
         \advance\@tempdimb by \dp1
         \advance\@tempdimb by -#2%
         \advance\@tempdimb by #3%
         \leavevmode
         \raise -\@tempdimb \hbox{\box1}%
      \fi
      \egroup%
}%
\def\DFRAME#1#2#3#4#5{%
 \begin{center}
     \let\QCTOptA\empty
     \let\QCTOptB\empty
     \let\QCBOptA\empty
     \let\QCBOptB\empty
     \ifOverFrame 
        #5\QCTOptA\par
     \fi
     \GRAPHIC{#4}{#3}{#1}{#2}{\z@}
     \ifUnderFrame 
        \nobreak\par #5\QCBOptA
     \fi
 \end{center}%
 }%
\def\FFRAME#1#2#3#4#5#6#7{%
 \begin{figure}[#1]%
  \let\QCTOptA\empty
  \let\QCTOptB\empty
  \let\QCBOptA\empty
  \let\QCBOptB\empty
  \ifOverFrame
    #4
    \ifx\QCTOptA\empty
    \else
      \ifx\QCTOptB\empty
        \caption{\QCTOptA}%
      \else
        \caption[\QCTOptB]{\QCTOptA}%
      \fi
    \fi
    \ifUnderFrame\else
      \label{#5}%
    \fi
  \else
    \UnderFrametrue%
  \fi
  \begin{center}\GRAPHIC{#7}{#6}{#2}{#3}{\z@}\end{center}%
  \ifUnderFrame
    #4
    \ifx\QCBOptA\empty
      \caption{}%
    \else
      \ifx\QCBOptB\empty
        \caption{\QCBOptA}%
      \else
        \caption[\QCBOptB]{\QCBOptA}%
      \fi
    \fi
    \label{#5}%
  \fi
  \end{figure}%
 }%
\def\makeactives{
  \catcode`\"=\active
  \catcode`\;=\active
  \catcode`\:=\active
  \catcode`\'=\active
  \catcode`\~=\active
}
   \gdef\activesoff{%
      \def"{\string"}
      \def;{\string;}
      \def:{\string:}
      \def'{\string'}
      \def~{\string~}
    }
\def\FRAME#1#2#3#4#5#6#7#8{%
 \bgroup
 \@ifundefined{bbl@deactivate}{}{\activesoff}
 \ifnum\draft=\@ne
   \wasdrafttrue
 \else
   \wasdraftfalse%
 \fi
 \def\LaTeXparams{}%
 \dispkind=\z@
 \def\LaTeXparams{}%
 \doFRAMEparams{#1}%
 \ifnum\dispkind=\z@\IFRAME{#2}{#3}{#4}{#7}{#8}{#5}\else
  \ifnum\dispkind=\@ne\DFRAME{#2}{#3}{#7}{#8}{#5}\else
   \ifnum\dispkind=\tw@
    \edef\@tempa{\noexpand\FFRAME{\LaTeXparams}}%
    \@tempa{#2}{#3}{#5}{#6}{#7}{#8}%
    \fi
   \fi
  \fi
  \ifwasdraft\draft=1\else\draft=0\fi{}%
  \egroup
 }%
\def\TEXUX#1{"texux"}
\def\func#1{\mathop{\rm #1}}%
\long\def\QQQ#1#2{%
     \long\expandafter\def\csname#1\endcsname{#2}}%
\long\def\QQA#1#2{}%
\def\QTR#1#2{{\csname#1\endcsname #2}}
\def\EXPAND#1[#2]#3{}%
\def\NOEXPAND#1[#2]#3{}%
\def\LaTeXparent#1{}%
\def\ChildStyles#1{}%
\def\ChildDefaults#1{}%
\def\QTagDef#1#2#3{}%
\def\QQfnmark#1{\footnotemark}
\def\makeatletter\input gnuindex.sty\makeatother\makeindex{\makeatletter\input gnuindex.sty\makeatother\makeindex}%
\def\initial#1{\bigbreak{\raggedright\large\bf #1}\kern 2\p@\penalty3000}}%
 \def\abstract{%
  \if@twocolumn
   \section*{Abstract (Not appropriate in this style!)}%
   \else \small 
   \begin{center}{\bf Abstract\vspace{-.5em}\vspace{\z@}}\end{center}%
   \quotation 
   \fi
  }%
   \def\registered{\relax\ifmmode{}\r@gistered
                    \else$\m@th\r@gistered$\fi}%
 \def\r@gistered{^{\ooalign
  {\hfil\raise.07ex\hbox{$\scriptstyle\rm\text{R}$}\hfil\crcr
  \mathhexbox20D}}}}{}%
\newdimen\theight
\def\Column{%
 \vadjust{\setbox\z@=\hbox{\scriptsize\quad\quad tcol}%
  \theight=\ht\z@\advance\theight by \dp\z@\advance\theight by \lineskip
  \kern -\theight \vbox to \theight{%
   \rightline{\rlap{\box\z@}}%
   \vss
   }%
  }%
 }%
\def\qed{%
 \ifhmode\unskip\nobreak\fi\ifmmode\ifinner\else\hskip5\p@\fi\fi
 \hbox{\hskip5\p@\vrule width4\p@ height6\p@ depth1.5\p@\hskip\p@}%
 }%
\def\miss{\hbox{\vrule height2\p@ width 2\p@ depth\z@}}%
\def\tcol#1{{\baselineskip=6\p@ \vcenter{#1}} \Column}  %
\def\newfmtname{LaTeX2e}
\def\chkcompat{%
   \if@compatibility
   \else
     \usepackage{latexsym}
   \fi
}
  \DeclareOldFontCommand{\rm}{\normalfont\rmfamily}{\mathrm}
  \DeclareOldFontCommand{\sf}{\normalfont\sffamily}{\mathsf}
  \DeclareOldFontCommand{\tt}{\normalfont\ttfamily}{\mathtt}
  \DeclareOldFontCommand{\bf}{\normalfont\bfseries}{\mathbf}
  \DeclareOldFontCommand{\it}{\normalfont\itshape}{\mathit}
  \DeclareOldFontCommand{\sl}{\normalfont\slshape}{\@nomath\sl}
  \DeclareOldFontCommand{\sc}{\normalfont\scshape}{\@nomath\sc}
\def\alpha{{\Greekmath 010B}}%
\def\beta{{\Greekmath 010C}}%
\def\gamma{{\Greekmath 010D}}%
\def\delta{{\Greekmath 010E}}%
\def\epsilon{{\Greekmath 010F}}%
\def\zeta{{\Greekmath 0110}}%
\def\eta{{\Greekmath 0111}}%
\def\theta{{\Greekmath 0112}}%
\def\iota{{\Greekmath 0113}}%
\def\kappa{{\Greekmath 0114}}%
\def\lambda{{\Greekmath 0115}}%
\def\mu{{\Greekmath 0116}}%
\def\nu{{\Greekmath 0117}}%
\def\xi{{\Greekmath 0118}}%
\def\pi{{\Greekmath 0119}}%
\def\rho{{\Greekmath 011A}}%
\def\sigma{{\Greekmath 011B}}%
\def\tau{{\Greekmath 011C}}%
\def\upsilon{{\Greekmath 011D}}%
\def\phi{{\Greekmath 011E}}%
\def\chi{{\Greekmath 011F}}%
\def\psi{{\Greekmath 0120}}%
\def\omega{{\Greekmath 0121}}%
\def\varepsilon{{\Greekmath 0122}}%
\def\vartheta{{\Greekmath 0123}}%
\def\varpi{{\Greekmath 0124}}%
\def\varrho{{\Greekmath 0125}}%
\def\varsigma{{\Greekmath 0126}}%
\def\varphi{{\Greekmath 0127}}%
\def\nabla{{\Greekmath 0272}}
\def\FindBoldGroup{%
   {\setbox0=\hbox{$\mathbf{x\global\edef\theboldgroup{\the\mathgroup}}$}}%
}
\def\Greekmath#1#2#3#4{%
    \if@compatibility
        \ifnum\mathgroup=\symbold
           \mathchoice{\mbox{\boldmath$\displaystyle\mathchar"#1#2#3#4$}}%
                      {\mbox{\boldmath$\textstyle\mathchar"#1#2#3#4$}}%
                      {\mbox{\boldmath$\scriptstyle\mathchar"#1#2#3#4$}}%
                      {\mbox{\boldmath$\scriptscriptstyle\mathchar"#1#2#3#4$}}%
        \else
           \mathchar"#1#2#3#4%
        \fi 
    \else 
        \FindBoldGroup
        \ifnum\mathgroup=\theboldgroup 
           \mathchoice{\mbox{\boldmath$\displaystyle\mathchar"#1#2#3#4$}}%
                      {\mbox{\boldmath$\textstyle\mathchar"#1#2#3#4$}}%
                      {\mbox{\boldmath$\scriptstyle\mathchar"#1#2#3#4$}}%
                      {\mbox{\boldmath$\scriptscriptstyle\mathchar"#1#2#3#4$}}%
        \else
           \mathchar"#1#2#3#4%
        \fi     	    
	  \fi}
\newif\ifGreekBold  \GreekBoldfalse
\let\SAVEPBF=\pbf
\def\pbf{\GreekBoldtrue\SAVEPBF}%
  \newcounter{equationnumber}  
  \def\mathletters{%
     \addtocounter{equation}{1}
     \edef\@currentlabel{\theequation}%
     \setcounter{equationnumber}{\c@equation}
     \setcounter{equation}{0}%
     \edef\theequation{\@currentlabel\noexpand\alph{equation}}%
  }
    \def\BibTeX{{\rm B\kern-.05em{\sc i\kern-.025em b}\kern-.08em
                 T\kern-.1667em\lower.7ex\hbox{E}\kern-.125emX}}}{}%
\def\AmS{{\protect\usefont{OMS}{cmsy}{m}{n}%
                A\kern-.1667em\lower.5ex\hbox{M}\kern-.125emS}}}{}%
\let\DOTSI\relax
\def\RIfM@{\relax\ifmmode}%
\def\FN@{\futurelet\next}%
\def\iint{\DOTSI\intno@\tw@\FN@\ints@}%
\def\iiint{\DOTSI\intno@\thr@@\FN@\ints@}%
\def\iiiint{\DOTSI\intno@4 \FN@\ints@}%
\def\idotsint{\DOTSI\intno@\z@\FN@\ints@}%
\def\ints@{\findlimits@\ints@@}%
\newif\iflimtoken@
\newif\iflimits@
\def\findlimits@{\limtoken@true\ifx\next\limits\limits@true
 \else\ifx\next\nolimits\limits@false\else
 \limtoken@false\ifx\ilimits@\nolimits\limits@false\else
 \ifinner\limits@false\else\limits@true\fi\fi\fi\fi}%
\def\multint@{\int\ifnum\intno@=\z@\intdots@                          
 \else\intkern@\fi                                                    
 \ifnum\intno@>\tw@\int\intkern@\fi                                   
 \ifnum\intno@>\thr@@\int\intkern@\fi                                 
 \int}
\def\multintlimits@{\intop\ifnum\intno@=\z@\intdots@\else\intkern@\fi
 \ifnum\intno@>\tw@\intop\intkern@\fi
 \ifnum\intno@>\thr@@\intop\intkern@\fi\intop}%
\def\intic@{%
    \mathchoice{\hskip.5em}{\hskip.4em}{\hskip.4em}{\hskip.4em}}%
\def\negintic@{\mathchoice
 {\hskip-.5em}{\hskip-.4em}{\hskip-.4em}{\hskip-.4em}}%
\def\ints@@{\iflimtoken@                                              
 \def\ints@@@{\iflimits@\negintic@
   \mathop{\intic@\multintlimits@}\limits                             
  \else\multint@\nolimits\fi                                          
  \eat@}
 \else                                                                
 \def\ints@@@{\iflimits@\negintic@
  \mathop{\intic@\multintlimits@}\limits\else
  \multint@\nolimits\fi}\fi\ints@@@}%
\def\intkern@{\mathchoice{\!\!\!}{\!\!}{\!\!}{\!\!}}%
\def\plaincdots@{\mathinner{\cdotp\cdotp\cdotp}}%
\def\intdots@{\mathchoice{\plaincdots@}%
 {{\cdotp}\mkern1.5mu{\cdotp}\mkern1.5mu{\cdotp}}%
 {{\cdotp}\mkern1mu{\cdotp}\mkern1mu{\cdotp}}%
 {{\cdotp}\mkern1mu{\cdotp}\mkern1mu{\cdotp}}}%
\def\RIfM@{\relax\protect\ifmmode}
\def\text{\RIfM@\expandafter\text@\else\expandafter\mbox\fi}
\let\nfss@text\text
\def\text@#1{\mathchoice
   {\textdef@\displaystyle\f@size{#1}}%
   {\textdef@\textstyle\tf@size{\firstchoice@false #1}}%
   {\textdef@\textstyle\sf@size{\firstchoice@false #1}}%
   {\textdef@\textstyle \ssf@size{\firstchoice@false #1}}%
   \glb@settings}
\def\textdef@#1#2#3{\hbox{{%
                    \everymath{#1}%
                    \let\f@size#2\selectfont
                    #3}}}
\newif\iffirstchoice@
\def\Let@{\relax\iffalse{\fi\let\\=\cr\iffalse}\fi}%
\def\vspace@{\def\vspace##1{\crcr\noalign{\vskip##1\relax}}}%
\def\multilimits@{\bgroup\vspace@\Let@
 \baselineskip\fontdimen10 \scriptfont\tw@
 \advance\baselineskip\fontdimen12 \scriptfont\tw@
 \lineskip\thr@@\fontdimen8 \scriptfont\thr@@
 \lineskiplimit\lineskip
 \vbox\bgroup\ialign\bgroup\hfil$\m@th\scriptstyle{##}$\hfil\crcr}%
\def\Sb{_\multilimits@}%
\def\endSb{\crcr\egroup\egroup\egroup}%
\def\Sp{^\multilimits@}%
\newdimen\ex@
\def\rightarrowfill@#1{$#1\m@th\mathord-\mkern-6mu\cleaders
 \hbox{$#1\mkern-2mu\mathord-\mkern-2mu$}\hfill
 \mkern-6mu\mathord\rightarrow$}%
\def\leftarrowfill@#1{$#1\m@th\mathord\leftarrow\mkern-6mu\cleaders
 \hbox{$#1\mkern-2mu\mathord-\mkern-2mu$}\hfill\mkern-6mu\mathord-$}%
\def\leftrightarrowfill@#1{$#1\m@th\mathord\leftarrow
\mkern-6mu\cleaders
 \hbox{$#1\mkern-2mu\mathord-\mkern-2mu$}\hfill
 \mkern-6mu\mathord\rightarrow$}%
\def\overrightarrow{\mathpalette\overrightarrow@}%
\def\overrightarrow@#1#2{\vbox{\ialign{##\crcr\rightarrowfill@#1\crcr
 \noalign{\kern-\ex@\nointerlineskip}$\m@th\hfil#1#2\hfil$\crcr}}}%
\def\overleftarrow{\mathpalette\overleftarrow@}%
\def\overleftarrow@#1#2{\vbox{\ialign{##\crcr\leftarrowfill@#1\crcr
 \noalign{\kern-\ex@\nointerlineskip}$\m@th\hfil#1#2\hfil$\crcr}}}%
\def\overleftrightarrow{\mathpalette\overleftrightarrow@}%
\def\overleftrightarrow@#1#2{\vbox{\ialign{##\crcr
   \leftrightarrowfill@#1\crcr
 \noalign{\kern-\ex@\nointerlineskip}$\m@th\hfil#1#2\hfil$\crcr}}}%
\def\underrightarrow{\mathpalette\underrightarrow@}%
\def\underrightarrow@#1#2{\vtop{\ialign{##\crcr$\m@th\hfil#1#2\hfil
  $\crcr\noalign{\nointerlineskip}\rightarrowfill@#1\crcr}}}%
\def\underleftarrow{\mathpalette\underleftarrow@}%
\def\underleftarrow@#1#2{\vtop{\ialign{##\crcr$\m@th\hfil#1#2\hfil
  $\crcr\noalign{\nointerlineskip}\leftarrowfill@#1\crcr}}}%
\def\underleftrightarrow{\mathpalette\underleftrightarrow@}%
\def\underleftrightarrow@#1#2{\vtop{\ialign{##\crcr$\m@th
  \hfil#1#2\hfil$\crcr
 \noalign{\nointerlineskip}\leftrightarrowfill@#1\crcr}}}%
\def\qopnamewl@#1{\mathop{\operator@font#1}\nlimits@}
\let\nlimits@\displaylimits
\def\setboxz@h{\setbox\z@\hbox}
\def\varlim@#1#2{\mathop{\vtop{\ialign{##\crcr
 \hfil$#1\m@th\operator@font lim$\hfil\crcr
 \noalign{\nointerlineskip}#2#1\crcr
 \noalign{\nointerlineskip\kern-\ex@}\crcr}}}}
 \def\rightarrowfill@#1{\m@th\setboxz@h{$#1-$}\ht\z@\z@
  $#1\copy\z@\mkern-6mu\cleaders
  \hbox{$#1\mkern-2mu\box\z@\mkern-2mu$}\hfill
  \mkern-6mu\mathord\rightarrow$}
\def\leftarrowfill@#1{\m@th\setboxz@h{$#1-$}\ht\z@\z@
  $#1\mathord\leftarrow\mkern-6mu\cleaders
  \hbox{$#1\mkern-2mu\copy\z@\mkern-2mu$}\hfill
  \mkern-6mu\box\z@$}
\def\projlim{\qopnamewl@{proj\,lim}}
\def\injlim{\qopnamewl@{inj\,lim}}
\def\varinjlim{\mathpalette\varlim@\rightarrowfill@}
\def\varprojlim{\mathpalette\varlim@\leftarrowfill@}
\def\varliminf{\mathpalette\varliminf@{}}
\def\varliminf@#1{\mathop{\underline{\vrule\@depth.2\ex@\@width\z@
   \hbox{$#1\m@th\operator@font lim$}}}}
\def\varlimsup{\mathpalette\varlimsup@{}}
\def\varlimsup@#1{\mathop{\overline
  {\hbox{$#1\m@th\operator@font lim$}}}}
\def\align{\@verbatim \frenchspacing\@vobeyspaces \@alignverbatim
You are using the "align" environment in a style in which it is not defined.}
\let\csname endalign*\endcsname =\endtrivlist
\def\alignat{\@verbatim \frenchspacing\@vobeyspaces \@alignatverbatim
You are using the "alignat" environment in a style in which it is not defined.}
\let\csname endalignat*\endcsname =\endtrivlist
\def\xalignat{\@verbatim \frenchspacing\@vobeyspaces \@xalignatverbatim
You are using the "xalignat" environment in a style in which it is not defined.}
\let\csname endxalignat*\endcsname =\endtrivlist
\def\gather{\@verbatim \frenchspacing\@vobeyspaces \@gatherverbatim
You are using the "gather" environment in a style in which it is not defined.}
\let\csname endgather*\endcsname =\endtrivlist
\def\multiline{\@verbatim \frenchspacing\@vobeyspaces \@multilineverbatim
You are using the "multiline" environment in a style in which it is not defined.}
\let\csname endmultiline*\endcsname =\endtrivlist
\def\arrax{\@verbatim \frenchspacing\@vobeyspaces \@arraxverbatim
You are using a type of "array" construct that is only allowed in AmS-LaTeX.}
\def\tabulax{\@verbatim \frenchspacing\@vobeyspaces \@tabulaxverbatim
You are using a type of "tabular" construct that is only allowed in AmS-LaTeX.}
\let\csname endarrax*\endcsname =\endtrivlist
\let\csname endtabulax*\endcsname =\endtrivlist
\def\@@eqncr{\let\@tempa\relax
    \ifcase\@eqcnt \def\@tempa{& & &}\or \def\@tempa{& &}%
      \else \def\@tempa{&}\fi
     \@tempa
     \if@eqnsw
        \iftag@
           \@taggnum
        \else
           \@eqnnum\stepcounter{equation}%
        \fi
     \fi
     \global\tag@false
     \global\@eqnswtrue
     \global\@eqcnt\z@\cr}
 \def\endequation{%
     \ifmmode\ifinner 
      \iftag@
        \addtocounter{equation}{-1} 
        $\hfil
           \displaywidth\linewidth\@taggnum\egroup \endtrivlist
        \global\tag@false
        \global\@ignoretrue   
      \else
        $\hfil
           \displaywidth\linewidth\@eqnnum\egroup \endtrivlist
        \global\tag@false
        \global\@ignoretrue 
      \fi
     \else   
      \iftag@
        \addtocounter{equation}{-1} 
        \eqno \hbox{\@taggnum}
        \global\tag@false%
        $$\global\@ignoretrue
      \else
        \eqno \hbox{\@eqnnum}
        $$\global\@ignoretrue
      \fi
     \fi\fi
 } 
 \newif\iftag@ \tag@false
 \def\tag{\@ifnextchar*{\@tagstar}{\@tag}}
 \def\@tag#1{%
     \global\tag@true
     \global\def\@taggnum{(#1)}}
 \def\@tagstar*#1{%
     \global\tag@true
     \global\def\@taggnum{#1}%
}
\begin{document}

\title{{\LARGE \textbf{Gramian-based reachability metrics for bilinear
      networks}}\thanks{A preliminary version of this work has been
    accepted as~\cite{YZ-JC:15-cdc} at the 2015 IEEE Conference on
    Decision and Control, Osaka, Japan.}}

\author{Yingbo Zhao and Jorge Cort\'{e}s \thanks{ Yingbo Zhao and
    Jorge Cort\'{e}s are with the Department of Mechanical and
    Aerospace Engineering, University of California at San Diego, La
    Jolla, CA 92093. Emails: \texttt{\small
      \{yiz326,cortes\}@ucsd.edu}}}

\maketitle

\begin{abstract}
  This paper studies Gramian-based reachability metrics for bilinear
  control systems. In the context of complex networks, bilinear
  systems capture scenarios where an actuator not only can affect the
  state of a node but also interconnections among nodes. Under the
  assumption that the input's infinity norm is bounded by some
  function of the network dynamic matrices, we derive a Gramian-based
  lower bound on the minimum input energy required to steer the state
  from the origin to any reachable target state.  This result
  motivates our study of various objects associated to the
  reachability Gramian to quantify the ease of controllability of the
  bilinear network: the minimum eigenvalue (worst-case minimum input
  energy to reach a state), the trace (average minimum input energy to
  reach a state), and its determinant (volume of the ellipsoid
  containing the reachable states using control inputs with no more
  than unit energy).  We establish an increasing returns property of
  the reachability Gramian as a function of the actuators, which in
  turn allows us to derive a general lower bound on the reachability
  metrics in terms of the aggregate contribution of the individual
  actuators. We conclude by examining the effect on the worst-case
  minimum input energy of the addition of bilinear inputs to
  difficult-to-control linear symmetric networks.  We show that the bilinear
  networks resulting from the addition of either inputs at a finite
  number of interconnections or at all self loops with weight
  vanishing with the network scale remain
  difficult-to-control. Various examples illustrate our results.
\end{abstract}

\thispagestyle{empty}\pagestyle{empty}

\section{Introduction}\label{Sec:Intro}

Complex networks such as electrical power grids, social networks, and
transportation networks, play an increasingly essential part in modern
society. A complex network typically consists of many dynamical
subsystems or nodes that interact with each other.  An important issue
is understanding to what extent the behavior of a large-scale, complex
network can be affected by controlling a few selected components.
Answering this question thoroughly would be of extreme value in the
analysis of biological networks and the design of engineered networks
with verifiable performance.  Existing results focus on linear control
models, where external control inputs can only directly affect the
state of a node, without affecting its interactions with other nodes.
In this paper, we are interested in taking the study of complex
networks to the nonlinear realm, where the control inputs may not only
affect directly node states but also change the interconnections among
nodes in the network.

\subsubsection*{Literature review}

Controllability refers to the property of being able to steer the
state of a dynamical system from any starting point to any terminal
point by means of appropriate inputs.  The controllability question in
the context of multi-agent systems and complex networks has recently
sparked an increasing body of research activity.  The basic idea is
understanding to what extent the state of the entire network can be
controlled by changing the states of some of its subsystems.  Using
graph-theoretic tools,~\cite{YYL-JJS-ALB:11} relates the number of
control nodes necessary to ensure controllability of a linear control
network to its degree distribution.~\cite{AO:14} considers the problem
of rendering a linear network controllable by affecting a small set of
variables with an external input.  The controllability properties of
consensus-type networks are studied employing the algebraic properties
of the network interconnection graph by~\cite{AR-MJ-MM-ME:09} in the
linear case and, more recently, by~\cite{CA-BG:14} in the nonlinear
case.  However, controllability is a binary, qualitative property that
does not quantify the amount of effort required to steer the system to
the terminal state. In the case of linear-time invariant systems, this
has motivated the study of various quantitative controllability
metrics based on the reachability Gramian\footnote{For a linear
  system, the reachability and the controllability Gramian are the
  same. However, this is not the case for bilinear systems.  Since we
  only discuss reachability, we use the term reachability Gramian.}.
\cite{GY-JR-YL-CL-BL:12} discusses upper and lower bounds on the
minimum energy to drive a network state from the origin to a target
state.  \cite{FP-SZ-FB:14} considers the selection of control nodes in
a complex linear network to reduce the worst-case minimum energy for
reachability.~\cite{THS-JL:14} proposes an optimal actuator placement
strategy in complex linear networks to reduce the average minimum
control energy over random target states.~\cite{VT-MAR-GJP-AJ:15}
considers the problem of minimal actuator placement in a linear
network so that a given bound on the minimum control effort for a
particular state transfer is satisfied while guaranteeing
controllability.

The use of linear control systems to model complex networks presumes
that the inputs only affect node states and not the interconnections
among them. This critical assumption may be too limiting for certain
classes of complex networks. For example, in the study of effective
connectivity in the brain, it is strongly
believed~\cite{KJF-LH-WP:03,JRI-AO-TM-MP-JS-GC-HP:14} that external
inputs not only have an effect on brain states in a particular area,
but can also change the strength of the coupling between the states of
different areas in the brain. These observations provide motivation
for our study of reachability metrics for complex networks modeled as
bilinear control systems.

Bilinear systems~\cite{CB-GD-GK:74,DE:09,PP-VY:10} are one of the
simplest classes of nonlinear systems but can be used to represent a
wide range of physical, chemical, economical, and biological systems
that cannot be effectively modeled using linear systems.
While reachability/controllability of bilinear systems as a binary
property has been widely investigated, see
e.g.,~\cite{DK-KN:85,UP-PF:92,TG-TT-JZ:73,
  LT-KC:11,DE:09} and references therein, few results are
available for quantitative metrics.  A notion of reachability Gramian
exists for bilinear systems, but its relation with the input energy
functional is not fully understood.
Under some assumptions, namely that at least one of the coefficient
matrices of the bilinear terms is nonsingular, that the target state
$x_{f}$ belongs to a neighborhood of the origin, and that an
integrability condition holds, \cite{WG-JM:98} shows that for a
continuous-time stable bilinear system with reachability Gramian
$\mathcal{W}_{c}$, the input energy required to drive the state from
the origin to $x_{f}$ is always greater than
$x_{f}^{T}\mathcal{W}_{c}^{-1}x_{f}$.  However, the integrability
condition may not hold for a general continuous-time bilinear system,
see \cite{EV:08,PB-TD:11} for a detailed discussion.  Instead of the
integrability condition,~\cite{PB-TD:11} assumes that the reachability
Gramian is diagonal and proves similar results for some $\epsilon>0$
and $x_{f}=\epsilon e_{j}$, where $e_{j}$ is any canonical unit vector
in $\mathbb{R}^{n}$.  However, for discrete-time bilinear systems,
there do not exist results analogous to these.

\subsubsection*{Statement of contributions}

We study the reachability properties of complex networks modeled as
bilinear control systems.  Our first contribution is the study of the
minimum input energy required to steer the system state from the
origin to any reachable target state.  Even though no closed-form
expression exists for the optimal controller and its associated cost
due to the nonlinear nature of bilinear systems, we establish a
Gramian-based lower bound on the minimum input energy required to
reach a target state, under the assumption that the infinity norm of
the input is bounded by some function of the system matrices.
Moreover, we show through a counterexample that this result does not
hold in general if the input is not constrained and, in fact, that
there does not exist a global positive lower bound for the ratio
between aggregate input norm and target state norm.  Our second
contribution introduces several Gramian-based reachability metrics for
bilinear control networks that quantify the worst-case and average
minimum input energy over all target states on the unit hypersphere in
the state space and the volume of the ellipsoid containing the
reachable states using control inputs with no more than unit energy.
We prove that the reachability Gramian, when viewed as a function of
the location of the actuators, exhibits an increasing returns
property. Building on this result, we derive a general lower bound on
the reachability metrics in terms of the aggregate contribution of the
individual actuators and lay out a greedy maximization strategy based
on selecting them sequentially starting with the one that has the
largest contribution.
Our third and final contribution involves bilinear systems built from
difficult to control linear networks. In particular, we show that a
bilinear system built from such a linear system by adding a finite
number of bilinear inputs is still difficult to control. We also
establish that a similar result holds even if the bilinear input can
equally affect all self loops in the network, with a strength that
vanishes with the network scale. Throughout the paper, we provide
numerous examples to illustrate the strengths and limitations of our
results.

\subsubsection*{Organization}

Section~\ref{Sec:PF} introduces discrete-time bilinear control systems
and states the problem of interest.  Section~\ref{Sec:pre} details
basic properties of the associated reachability Gramian and
Section~\ref{Sec:main} establishes its relationship with the input
energy functional. Motivated by this result,
Section~\ref{Sec:act_select} explores the problem of selecting
actuators to maximize various Gramian-based reachability metrics.
Section~\ref{Sec:imp_bl} examines the effect that the addition of
bilinear inputs has on the worst-case minimum input energy for
difficult-to-control linear networks.  We gather our conclusions and
ideas for future work in Section~\ref{Sec:con}.

\subsubsection*{Notation}

For a vector $x\in \mathbb{R}^{n}$, we use $x_{i}$ to denote its
$i$-th component and $\lVert x\rVert _{\infty }$ to denote its
infinity norm. For a matrix $M\in \mathbb{R}^{n\times m}$, we use
$M_{i}\in \mathbb{R}^{n}$ to denote its $i$-th column so that $M = [
M_{1} \hspace{1ex} M_{2} \hspace{1ex} \cdots \hspace{1ex} M_{m} ] $.
The vector generated by stacking the columns of $M$ is $\func{vec}(M)
= [ M_{1}^{T} \hspace{1ex} M_{2}^{T} \hspace{1ex} \cdots \hspace{1ex}
M_{m}^{T}]^{T}$.  The spectral norm (maximum singular value) of $M$ is
denoted by $\lVert M\rVert $.  For symmetric (square) matrices, we use
$\lambda _{\max }(M)$ to denote the maximum eigenvalue and $M>0$
(resp. $M\geq 0$) to denote that $M$ is positive definite (resp. $M$
is positive semidefinite).  The spectral radius of $M$, denoted $\rho
(M)$, is the supremum among the magnitudes of its eigenvalues. The
matrix $M$ is Schur stable if $\rho (M)<1$.  We let $ \mathbf{0}_{n}$
and $\mathbf{0}_{m\times n}$ denote the $n$-vector and $m\times n$
matrix with all elements equal to zero, respectively. We let $I_{n}$
denote the identity matrix of dimension $n\times n$.  Given a sequence
$\{x(k)\}_{k =0}^\infty$ and $j_1\leq j_2\in \mathbb{Z}_{\geq 0}$, we
use $\{x\}_{j_1}^{j_2}$ to denote the finite sequence $\{
x(j_1),x(j_1+1),\dotsc ,x(j_2) \}$. We omit $j_1$ if $j_1=0$. We let
$\func{diag}(A_{1},\cdots ,A_{n})$ denote the block-diagonal matrix
defined by the matrices $A_1,\dots,A_n$. Finally, the symbol $\otimes
$ represents the Kronecker product of matrices.

\section{Problem Formulation}\label{Sec:PF}

We consider the class of discrete-time bilinear control systems with
state-space representation%
\begin{equation}
  x(k+1) = A x(k)+\sum_{j=1}^{m} (F_{j}x(k)+B_{j}) u_{j}(k),  \label{eq:BLS}
\end{equation}%
where $k\in \mathbb{Z}_{\geq 0}$ is the time index, $x(k)\in
\mathbb{R}^{n}$ is the system state, $u(k) = [u_{1}(k),\dots,u_{m}(k)]
\in \mathbb{R}^m$ is the control input and $ A$, $F_{j}\in
\mathbb{R}^{n\times n}$, $B_{j}\in \mathbb{R}^{n}$, $j \in \until{m}$
are the system matrices. When convenient, we simply refer to the
bilinear control system~\eqref{eq:BLS} by~$(A,F,B)$, where $ F = [
F_{1} \hspace*{1ex} F_{2} \hspace*{1ex} \cdots \hspace*{1ex} F_{m} ]$
and $B = [ B_{1} \hspace*{1ex} B_{2} \hspace*{1ex} \cdots
\hspace*{1ex} B_{m}]$.  Throughout the paper, we assume that $A$ is
Schur stable.  There is no loss of generality in letting the same
input $u_{j}(k)$ appear simultaneously in the bilinear and linear
terms in~\eqref{eq:BLS}.  In fact, a general bilinear system%
\begin{equation}
  x(k+1) = A x(k) + \sum_{j=1}^{p}\bar{F}_{j}x(k)v_{j}(k) +
  \sum_{j=1}^{q}\bar{B}_{j}w_{j}(k) ,
  \label{eq:BLS_1}
\end{equation}
with $v(k)\in \mathbb{R}^{p}$ and $w(k)\in \mathbb{R}^{q}$, can be
rewritten in the form of~\eqref{eq:BLS} by defining $u(k) = [ v^{T}(k)
\hspace*{1ex} w^{T}(k) ]^{T}$, $ F =[ \bar{F} \hspace*{1ex}
\mathbf{0}_{n\times nq} ]$, $B = [ \mathbf{0}_{n\times p}
\hspace*{1ex} \bar{B} ]$, and $m=p+q$.

The system~\eqref{eq:BLS} is \emph{controllable} in a set $\mathbb{%
  S\subseteq R}^{n}$ if, for any given pair of initial and target
states in $ \mathbb{S}$, there exists a finite control sequence that
drives the system from one to the other. The notion of
\emph{reachability} corresponds to controllability from the origin,
i.e., the existence of a finite control sequence that takes the state
from the origin to an arbitrary target state in
$\mathbb{S}$. Controllability and reachability are qualitative
measures of a system that do not precisely characterize how easy or
difficult, in terms of control effort, it is for the system to go from
one state to another.  Our objective is to provide quantitative
measures of the degree of reachability for the bilinear control
system~\eqref{eq:BLS}. Note that, unlike linear systems, the
controllability of a bilinear system depends on its initial
condition. Here, we focus on reachability. Formally, consider the
minimum-energy optimal control problem for a given target state
$x_{f}$ and a time horizon $K\in \mathbb{Z}_{>0}$, defined by
\begin{equation}\label{pb:original_op}
  \begin{array}{rl}
    \min_{\{u\}^{K-1}} & \sum_{k=0}^{K-1}u^{T}(k)u(k) \\
    \text{s.t.}
    & \multicolumn{1}{l}{\text{$\eqref{eq:BLS}$ holds }
      \forall k=0,\ldots ,K-1,}
    \\
    & \multicolumn{1}{l}{x(0)=\mathbf{0}_{n}, \,\, x(K)=x_{f}.}%
  \end{array}%
\end{equation}
%
Our aim can then be formulated as seeking to characterize the value of
the optimal solution of~\eqref{pb:original_op} in terms of the data
$(A,F,B)$ that defines the bilinear control system.

\section{Reachability Gramian}\label{Sec:pre}

This section introduces the notion of reachability Gramian for stable
discrete-time bilinear systems and characterizes some useful
properties. Our discussion sets the basis for our later analysis on
the relationship between the reachability Gramian and the
minimum-energy optimal control problem~\eqref{pb:original_op}.


\begin{definition}\longthmtitle{Reachability
    Gramian~\cite{LZ-JL-BH-GY:03}}
  The reachability Gramian for a stable discrete-time bilinear system
  $(A,F,B)$ is
  \begin{equation}
    \mathcal{W} = \sum_{i=1}^{\infty }\mathcal{W}_{i},  \label{eq:reach_Gramian}
  \end{equation}%
  where
  \begin{align*}
    \mathcal{W}_{i}& =\sum_{k_{1},\ldots ,k_{i}=0}^{\infty
    }\mathcal{P}%
    _{i}(\{k\}_{1}^{i})\mathcal{P}_{i}^{T}(\{k\}_{1}^{i}), \\
    \mathcal{P}_{1}(\{k\}_{1}^{1})& =A^{k}B\in \mathbb{R}^{n\times m}, \\
    \mathcal{P}_{i}(\{k\}_{1}^{i})& =A^{k_{i}}F(I_{m}\otimes
    \mathcal{P}%
    _{i-1}(\{k\}_{1}^{i-1}))\in \mathbb{R}^{n\times m^{i}},\,\, i\geq
    2.
  \end{align*}
\end{definition}

The reachability Gramian for continuous-time bilinear systems is
defined analogously, see e.g.,~\cite{SB-MB-US:94,WG-EV:06}. This
notion of reachability Gramian is widely used in model order reduction
of bilinear systems~\cite{LZ-JL:02,PB-TB-TD:11} and linear switched
systems \cite{MP-RW-JL:13}. Notice that, for linear control systems
(i.e., $F=\mathbf{0}_{n\times nm}$ in~\eqref{eq:BLS}), the
reachability Gramian in~\eqref{eq:reach_Gramian} takes the form
\begin{equation}
  \mathcal{W}=\mathcal{W}_{1}=\sum_{k=0}^{\infty }A^{k}BB^{T}(A^{T})^{k},
\label{eq:reach_Gramian_linear}
\end{equation}%
which is the reachability Gramian associated to the corresponding
discrete-time linear time-invariant system~\cite{TK:80}.

Throughout the paper, we assume that $(A,F,B)$ are such that the
series in~\eqref{eq:reach_Gramian} converges and the resulting matrix
is positive definite. A sufficient condition for the latter is that
$(A,\mathbf{0}_{n\times nm},B)$ is controllable, which in turn is
equivalent to $\mathcal{W}_{1}>0$.  We discuss necessary and
sufficient conditions for the convergence of the series below
in~\eqref{ineq:exist_W}.


The reachability Gramian is a solution of a generalized Lyapunov
equation~\cite{WG-JM:98,WG-EV:06}. The next result appears
in~\cite{LZ-JL-BH-GY:03,PB-TD:11}.  We provide a formal proof for the
sake of completeness.

\begin{theorem}\longthmtitle{Generalized Lyapunov
    equation}\label{Theo1}
  The reachability Gramian $\mathcal{W}$ satisfies the following
  generalized Lyapunov equation
  \begin{equation}
    A\mathcal{W}A^{T} - \mathcal{W} + \sum_{j=1}^{m}F_{j}\mathcal{W}%
    F_{j}^{T}+BB^{T}= \mathbf{0}_{n\times n}.  \label{eq:Theo_1}
  \end{equation}
\end{theorem}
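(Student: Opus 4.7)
The plan is to establish the generalized Lyapunov equation by showing that each summand $\mathcal{W}_{i}$ in the series defining $\mathcal{W}$ obeys its own Lyapunov-type recursion, and then to add these recursions together. The central algebraic tools are the Kronecker product identity $(I_{m}\otimes X)(I_{m}\otimes Y) = I_{m}\otimes(XY)$, together with the block-partition fact that $F(I_{m}\otimes Z)F^{T} = \sum_{j=1}^{m} F_{j} Z F_{j}^{T}$ for any $Z\in\mathbb{R}^{n\times n}$, which follows directly from $F=[F_{1}\ \cdots\ F_{m}]$ and the block-diagonal structure of $I_{m}\otimes Z$. This is the mechanism by which the $F_{j}$-contributions assemble out of the nested definition of $\mathcal{P}_{i}$.

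First I would handle the base case $i=1$: since $\mathcal{P}_{1}(k_{1})=A^{k_{1}}B$, the matrix $\mathcal{W}_{1}=\sum_{k=0}^{\infty}A^{k}BB^{T}(A^{T})^{k}$ is exactly the reachability Gramian of the associated linear pair $(A,B)$, and Schur stability of $A$ together with the assumed convergence of the series yields the classical discrete-time Lyapunov identity $\mathcal{W}_{1}-A\mathcal{W}_{1}A^{T}=BB^{T}$.

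Next I would address the step $i\geq 2$. Substituting the recursive definition $\mathcal{P}_{i}=A^{k_{i}}F(I_{m}\otimes\mathcal{P}_{i-1})$ into $\mathcal{W}_{i}$, pulling the outer $A^{k_{i}}$ and $F$ factors outside the sums over $k_{1},\dots,k_{i-1}$, and applying the two identities above, I would obtain
\begin{equation*}
\mathcal{W}_{i} \;=\; \sum_{k=0}^{\infty} A^{k}\!\left(\sum_{j=1}^{m} F_{j}\mathcal{W}_{i-1} F_{j}^{T}\right)\!(A^{T})^{k},
\end{equation*}
which is a standard linear Lyapunov series whose value therefore satisfies $\mathcal{W}_{i}-A\mathcal{W}_{i}A^{T}=\sum_{j=1}^{m}F_{j}\mathcal{W}_{i-1}F_{j}^{T}$.

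Finally I would add these recursions over all $i\geq 1$. The $BB^{T}$ contribution appears only once (from $i=1$), the left-hand sides collapse into $\mathcal{W}-A\mathcal{W}A^{T}$, and a reindexing on the right-hand side, $\sum_{i=2}^{\infty}\mathcal{W}_{i-1}=\mathcal{W}$, yields $\sum_{j=1}^{m}F_{j}\mathcal{W}F_{j}^{T}$. A simple rearrangement then produces~\eqref{eq:Theo_1}. The main subtlety, rather than any single algebraic step, is the interchange of the two levels of summation (the outer sum in $i$ with the inner sum that defines each $\mathcal{W}_{i}$): since every $\mathcal{W}_{i}$ is positive semidefinite and the standing hypothesis is that $\mathcal{W}=\sum_{i}\mathcal{W}_{i}$ converges, the partial sums are monotone and bounded entrywise, so Tonelli/monotone convergence applied componentwise justifies all the rearrangements used above.
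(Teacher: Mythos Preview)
Your proposal is correct and follows essentially the same route as the paper: derive the linear Lyapunov identity for $\mathcal{W}_{1}$, establish the recursion $\mathcal{W}_{i}-A\mathcal{W}_{i}A^{T}=\sum_{j}F_{j}\mathcal{W}_{i-1}F_{j}^{T}$ for $i\geq 2$ via the nested definition of $\mathcal{P}_{i}$, and then sum over $i$. Your explicit remarks on the Kronecker/block identity $F(I_{m}\otimes Z)F^{T}=\sum_{j}F_{j}ZF_{j}^{T}$ and on the monotone-convergence justification for the termwise summation are welcome additions that the paper's proof leaves implicit.
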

\begin{proof}
  From~\eqref{eq:reach_Gramian_linear}, one can see that
  $\mathcal{W}_{1}$ satisfies
  \begin{equation}\label{eq1:Pf_Theo_1}
    A\mathcal{W}_{1}A^{T}-\mathcal{W}_{1}+BB^{T}=\mathbf{0}_{n\times n}.
  \end{equation}%
  For $i\geq 2$, we obtain
  \begin{align}
    \mathcal{W}_{i}& =\sum_{k_{1},\ldots ,k_{i}=0}^{\infty
    }\mathcal{P}%
    _{i}(\{k\}_{1}^{i})\mathcal{P}_{i}^{T}(\{k\}_{1}^{i})  \notag \\
    & =\sum_{k_{1},\ldots ,k_{i}=0}^{\infty }A^{k_{i}}F\left(I_{m}\otimes
    \mathcal{P}_{i-1}\mathcal{P}_{i-1}^{T})F^{T}(A^{k_{i}}\right)^{T}  \notag \\
    & =\sum_{k_{i}=0}^{\infty
    }A^{k_{i}}\bigl(\sum_{j=1}^{m}F_{j}\sum_{k_{1},%
      \ldots ,k_{i-1}=0}^{\infty
    }\mathcal{P}_{i-1}\mathcal{P}_{i-1}^{T}F_{j}^{T}%
    \bigr)(A^{k_{i}})^{T}  \notag \\
    & =\sum_{k_{i}=0}^{\infty
    }A^{k_{i}}\bigl(\sum_{j=1}^{m}F_{j}\mathcal{W}%
    _{i-1}F_{j}^{T}\bigr)(A^{k_{i}})^{T}.  \label{eq:W_i_W_i-1}
  \end{align}%
  Therefore,
  \begin{equation}
    A\mathcal{W}_{i}A^{T}-\mathcal{W}_{i}+\sum_{j=1}^{m}F_{j}\mathcal{W}%
    _{i-1}F_{j}^{T}=\mathbf{0}_{n\times n}.  \label{eq2:Pf_Theo_1}
  \end{equation}%
  We obtain (\ref{eq:Theo_1}) by summing\ (\ref{eq1:Pf_Theo_1}) and
  (\ref{eq2:Pf_Theo_1}) with $i$ ranging from $2$ to $\infty $.
\end{proof}

It is thus possible to obtain the reachability Gramian $\mathcal{W}$
by solving the generalized Lyapunov equation (\ref{eq:Theo_1}), which
one can do by computing%
\begin{equation}
  \func{vec}(\mathcal{W})=(I_{n^{2}}-A\otimes A-\sum_{j=1}^{m}F_{j}\otimes
  F_{j})^{-1}\func{vec}(BB^{T}).  \label{eq:sol_Lya}
\end{equation}%
Moreover, \cite{RGA-EIJ:71} shows that a unique positive semi-definite
solution $\mathcal{W}$\ exists if and only if%
\begin{equation}
  \rho (A\otimes A+\sum_{j=1}^{m}F_{j}\otimes F_{j})<1,  \label{ineq:exist_W}
\end{equation}%
a condition that we assume to hold throughout the paper.

\begin{remark}\longthmtitle{Connection with mean-square stability of stochastic
    bilinear systems}
  Following~\cite{AH-RES:87,RES-SMK-EY:91}, consider the
  time-invariant discrete-time stochastic bilinear system
  \begin{equation}
    x(k+1) = A x(k) + \sum_{j=1}^{p}F_{j}x(k)v_{j}(k)+\sum_{j=1}^{q} B_{j}w_{j}(k),
    \label{eq:sto_bilinear}
  \end{equation}%
  where $v(k)\in \mathbb{R}^{p}$ and $w(k)\in \mathbb{R}^{q}$ are
  random variables. We have used the form \eqref{eq:BLS_1}, which is
  equivalent to~\eqref{eq:BLS}. Assume $\{w\}^{\infty }$ and
  $\{v\}^{\infty }$ are uncorrelated stationary zero-mean white
  processes satisfying
  \begin{equation*}
    \E\lbrack v(k)v^{T}(j)]=I_{p}\delta _{kj},\quad \E\lbrack
    w(k)w^{T}(j)] = I_{q}\delta _{kj}.
  \end{equation*}%
  If the system is mean-square stable, then the positive semi-definite
  steady state covariance $\E\lbrack x(k)x^{T}(k)]$ satisfies the
  generalized Lyapunov equation~\eqref{eq:Theo_1}. Therefore, the
  existence of the reachability Gramian is related to the mean square
  stability of the corresponding stochastic bilinear system
  \eqref{eq:sto_bilinear}, which is equivalent to
  \eqref{ineq:exist_W}.\relax%
  \ifmmode\else\unskip\hfill\fi\hbox{$\bullet$}
\end{remark}

To conclude this section, we show that any target state $x_{f}$ that is
reachable from the origin, $x(0)=\mathbf{0}_{n}$, belongs to $\func{Im}(%
\mathcal{W})$. An analogous result is known for continuous-time bilinear
systems~\cite[Theorem 3.1]{PB-TD:11}.

\begin{proposition}\label{Theo2_ImW_invariant}
  The subspace $\func{Im}(\mathcal{W})$ is invariant under the
  bilinear control system~\eqref{eq:BLS} defined by $%
  (A,F,B)$.
\end{proposition}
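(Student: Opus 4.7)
My plan is to prove invariance by showing that each of the three linear maps appearing in the dynamics, namely $A$, $F_{1},\dots,F_{m}$, and the columns of $B$, preserves $\func{Im}(\mathcal{W})$. Concretely, writing the update as $x(k+1) = Ax(k) + \sum_{j=1}^{m} F_{j}x(k)u_{j}(k) + \sum_{j=1}^{m} B_{j}u_{j}(k)$, if $x(k)\in \func{Im}(\mathcal{W})$ and the three conditions $A\func{Im}(\mathcal{W})\subseteq \func{Im}(\mathcal{W})$, $F_{j}\func{Im}(\mathcal{W})\subseteq \func{Im}(\mathcal{W})$, and $B_{j}\in \func{Im}(\mathcal{W})$ hold for each $j$, then $x(k+1)\in \func{Im}(\mathcal{W})$ for any input.

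The key tool is Theorem~\ref{Theo1}, which I rewrite as
\begin{equation*}
    \mathcal{W} = A\mathcal{W}A^{T} + \sum_{j=1}^{m} F_{j}\mathcal{W} F_{j}^{T} + BB^{T}.
\end{equation*}
Every term on the right-hand side is positive semidefinite, as is $\mathcal{W}$ itself. For a finite sum of symmetric PSD matrices $P = \sum_{i} P_{i}$, the identity $\ker(P) = \bigcap_{i} \ker(P_{i})$ holds (since $x^{T}Px = \sum_{i} x^{T}P_{i}x$ is a sum of nonnegative terms that must all vanish simultaneously), and taking orthogonal complements yields $\func{Im}(P) = \sum_{i} \func{Im}(P_{i})$. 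Applied to the decomposition above,
\begin{equation*}
    \func{Im}(\mathcal{W}) = \func{Im}(A\mathcal{W}A^{T}) + \sum_{j=1}^{m}\func{Im}(F_{j}\mathcal{W}F_{j}^{T}) + \func{Im}(BB^{T}).
\end{equation*}

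Now I use the standard fact that $\func{Im}(XX^{T}) = \func{Im}(X)$ for any matrix $X$. Factoring $\mathcal{W} = \mathcal{W}^{1/2}\mathcal{W}^{1/2}$ and applying this identity with $X = A\mathcal{W}^{1/2}$ gives $\func{Im}(A\mathcal{W}A^{T}) = \func{Im}(A\mathcal{W}^{1/2}) = A\func{Im}(\mathcal{W}^{1/2}) = A\func{Im}(\mathcal{W})$. The same argument with $X = F_{j}\mathcal{W}^{1/2}$ yields $\func{Im}(F_{j}\mathcal{W}F_{j}^{T}) = F_{j}\func{Im}(\mathcal{W})$, and of course $\func{Im}(BB^{T}) = \func{Im}(B)$. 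Substituting back, $\func{Im}(\mathcal{W}) = A\func{Im}(\mathcal{W}) + \sum_{j=1}^{m} F_{j}\func{Im}(\mathcal{W}) + \func{Im}(B)$, which immediately yields all three required inclusions.

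I do not anticipate a serious obstacle; the only point that requires a little care is the PSD-sum identity $\func{Im}(\sum_{i} P_{i}) = \sum_{i} \func{Im}(P_{i})$, which fails for general symmetric matrices but holds here because each summand is positive semidefinite. Once invariance of $\func{Im}(\mathcal{W})$ under $A$, each $F_{j}$, and each $B_{j}$ is established, the conclusion that every state reachable from $x(0)=\mathbf{0}_{n}$ lies in $\func{Im}(\mathcal{W})$ follows by a trivial induction on the time index $k$.
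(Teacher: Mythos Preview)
Your proof is correct and essentially matches the paper's: both rely on the generalized Lyapunov equation~\eqref{eq:Theo_1} and the fact that for a sum of PSD matrices the kernel is the intersection of the kernels (equivalently, the image is the sum of the images). The only cosmetic difference is that the paper argues on the kernel side---showing $A^{T}v,\,F_{j}^{T}v\in\ker(\mathcal{W})$ and $B^{T}v=0$ for every $v\in\ker(\mathcal{W})$, then passing to $\func{Im}(\mathcal{W})=(\ker(\mathcal{W}))^{\perp}$---whereas you work directly with images.
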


\begin{proof}
For all $v\in \func{Ker}(\mathcal{W})$, it holds that
\begin{equation*}
0=v^{T}\mathcal{W}v=v^{T}(A\mathcal{W}A^{T}+\sum_{j=1}^{m}F_{j}\mathcal{W}%
F_{j}^{T}+BB^{T})v,
\end{equation*}%
where the last equation follows from (\ref{eq:Theo_1}). As a result,
\begin{equation*}
A^{T}v\in \func{Ker}(\mathcal{W}),\quad F_{j}^{T}v\in \func{Ker}(\mathcal{W}%
),\quad B^{T}v=0.
\end{equation*}%
Note that since $\mathcal{W}$ is symmetric, $\func{Im}(\mathcal{W})=(\func{%
Ker}(\mathcal{W}))^{\bot }$. Therefore, if $x(k)\in \func{Im}(\mathcal{W})$,
then
\begin{equation*}
x^{T}(k+1)v=x^{T}(k)A^{T}v+%
\sum_{j=1}^{m}u_{j}(k)(x^{T}(k)F_{j}^{T}v+B_{j}^{T}v)=0,
\end{equation*}%
which implies that $x(k+1)\in \func{Im}(\mathcal{W})$ because $x(k+1)$ is
orthogonal to all $v\in \func{Ker}(\mathcal{W})$ and the proof is complete.
\end{proof}

Given that $x(0) = \mathbf{0}_{n}\in \func{Im}(\mathcal{W})$,
Theorem~\ref{Theo2_ImW_invariant} implies that $x(k)\in \func{Im}(%
\mathcal{W})$ for all $k\in \mathbb{Z}_{\geq 0}$, and therefore, any
target state $x_{f}$ that is reachable from the origin belongs
to~$\func{Im}(%
\mathcal{W})$.

\section{Minimum input energy for reachability}\label{Sec:main}

In this section, we obtain a lower bound on the minimum input energy
required to steer the state of a bilinear control system from the
origin to any reachable state under the assumption that the input norm
is upper bounded. The bound on the minimum input energy is a function
of the reachability Gramian. We build on this result later to define
reachability metrics for bilinear control systems.

From the formulation~\eqref{pb:original_op} of the optimal control
problem in Section~\ref{Sec:PF}, the necessary optimality conditions
for the solution $\{u^{\ast }\}^{K-1}$ lead to the following nonlinear
two-point boundary-value problem~\cite{ZA-ZG:94} for $k=0,\ldots ,K-1$
\begin{align}
  x(k+1)& =Ax(k)+\frac{1}{2}\sum_{j=1}^{m}(F_{j}x(k)+B_{j})
  (F_{j}x(k)+B_{j})^{^{T}} \hspace*{-5pt} \eta (k), \notag
  \\
  \eta (k-1)& =A^{T}\eta (k)+\frac{1}{2}\sum_{j=1}^{m}\eta(k)^{^T}
  \bigl(%
  F_{j}x(k) +B_{j}\bigr) \cdot F_{j}^{^{T}}\eta (k) , \notag
  \\
  u_{j}^{\ast }(k)& =\frac{1}{2}(F_{j}x(k)+B_{j})^{^{T}}\eta (k).
  \label{eq:nonlin_2pt_bv}
\end{align}
For a stable, controllable, linear time-invariant system
$(A,\mathbf{0}%
_{n\times nm},B)$, one can obtain analytically the optimal control
sequence from~\eqref{eq:nonlin_2pt_bv},
\begin{equation*}
u^{\ast }(k)=B^{^{T}}(A^{^{T}})^{K-k-1}\mathcal{W}_{1,K}^{-1}x_{f},
\end{equation*}%
with associated minimum control energy
\begin{equation}
\sum_{k=0}^{K-1}(u^{\ast }(k))^{T}u^{\ast }(k)=x_{f}^{T}\mathcal{W}%
_{1,K}^{-1}x_{f}>x_{f}^{T}\mathcal{W}_{1}^{-1}x_{f},  \label{eq:LTI_Gra_LB}
\end{equation}%
where $\mathcal{W}_{1,K}\triangleq
\sum_{k=0}^{K-1}A^{k}BB^{T}(A^{T})^{k}$ denotes the $K$-step
controllability Gramian of the linear time-invariant system. In
general, the nonlinear two-point boundary-value problem~
\eqref{eq:nonlin_2pt_bv} does not admit an analytical solution, which
has motivated the use of numerical approaches such as successive
approximations~\cite{GYT-HM-BLZ:05} and iterative
methods~\cite{EH-BT:88}. Given the paper goals, we do not try to find
the optimal control sequence but instead focus on the expression for
the minimum control energy and, specifically, on its connection with
the reachability Gramian.

The next result shows how, when the infinity norm of the input is
upper bounded by a specific function of the system matrices, the lower
bound in~\eqref{eq:LTI_Gra_LB} also holds.

\begin{theorem}\longthmtitle{The reachability Gramian is a metric for
    reachability}\label{Theo5}
  For the bilinear control system~\eqref{eq:BLS}, define
  \begin{align*}
    \beta & \triangleq -\sum_{j=1}^{m}\lVert A^{T}\Psi
    F_{j}+F_{j}^{T}\Psi A\lVert +\Bigl(\bigl(\sum_{j=1}^{m}\lVert
    A^{T}\Psi F_{j}+F_{j}^{T}\Psi A\lVert \bigr)^{2}
    \\
    & \quad -4\sum_{i,j=1}^{m}\lVert F_{j}^{T}\Psi F_{i}\lVert \cdot
    \lambda _{\max }(A^{T}\Psi A-\mathcal{W}^{-1})\Bigr)^{1/2},
    \\
    \Psi & \triangleq
    \mathcal{W}^{-1}-\mathcal{W}^{-1}B(B^{T}\mathcal{W}%
    ^{-1}B-I_{m})^{-1}B^{T}\mathcal{W}^{-1} .
  \end{align*}%
  For $K\in \mathbb{Z}_{\geq 1}$, if
  \begin{equation}
    \lVert u(k)\rVert _{\infty }\leq 2^{-1}\bigl(\sum_{i,j=1}^{m}\lVert
    F_{j}^{T}\Psi F_{i}\rVert \bigr)^{-1}\beta ,  \label{eq:Theo5_1}
  \end{equation}%
  for all $k=0,1,\cdots ,K-1$,
  then
  \begin{equation}
    \sum_{k=0}^{K-1}u^{T}(k)u(k)\geq x^{T}(K)\mathcal{W}^{-1}x(K).
    \label{eq:Theo5_2}
  \end{equation}
\end{theorem}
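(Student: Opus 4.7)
The plan is to establish the theorem by proving a one-step ``energy dissipation'' inequality for the Lyapunov-like function $V(x) \triangleq x^{T} \mathcal{W}^{-1} x$, namely $V(x(k+1)) - V(x(k)) \leq u^{T}(k) u(k)$ under the hypothesis~\eqref{eq:Theo5_1}, and then telescoping from $k=0$ to $K-1$ using $x(0) = \mathbf{0}_{n}$. To handle the bilinear structure cleanly, I rewrite the state update as $x(k+1) = A_{u(k)} x(k) + B u(k)$, where $A_{u} \triangleq A + \sum_{j=1}^{m} u_{j} F_{j}$. The key algebraic identity, obtained from the Sherman--Morrison--Woodbury formula applied to the definition of $\Psi$, is $\Psi = (\mathcal{W} - BB^{T})^{-1}$; combined with the generalized Lyapunov equation~\eqref{eq:Theo_1}, it yields $\mathcal{W} - BB^{T} = A\mathcal{W} A^{T} + \sum_{j=1}^{m} F_{j} \mathcal{W} F_{j}^{T} \succeq 0$, from which I obtain $\Psi \succ 0$, $I_{m} - B^{T} \mathcal{W}^{-1} B \succ 0$, and (by one more Schur complement) $A^{T} \Psi A \preceq \mathcal{W}^{-1}$.

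The first main step is to recast the one-step inequality as a matrix condition. A sufficient condition for $V(A_{u} x + Bu) - V(x) \leq u^{T} u$ to hold for every $x$ is that the quadratic form $(A_{u} x + Bv)^{T} \mathcal{W}^{-1} (A_{u} x + Bv) - x^{T} \mathcal{W}^{-1} x - v^{T} v$ be non-positive for all $(x, v) \in \mathbb{R}^{n+m}$, which is equivalent to
\begin{equation*}
\begin{pmatrix} \mathcal{W}^{-1} - A_{u}^{T} \mathcal{W}^{-1} A_{u} & -A_{u}^{T} \mathcal{W}^{-1} B \\ -B^{T} \mathcal{W}^{-1} A_{u} & I_{m} - B^{T} \mathcal{W}^{-1} B \end{pmatrix} \succeq 0.
\end{equation*}
Taking the Schur complement with respect to the positive-definite lower-right block $I_{m} - B^{T} \mathcal{W}^{-1} B$ and substituting the identity for $\Psi$ collapses this to the clean condition $A_{u}^{T} \Psi A_{u} \preceq \mathcal{W}^{-1}$.

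The second step expands this matrix condition in powers of $u$:
\begin{align*}
A_{u}^{T} \Psi A_{u} - \mathcal{W}^{-1} &= (A^{T} \Psi A - \mathcal{W}^{-1}) \\ &\quad + \sum_{j=1}^{m} u_{j} (A^{T} \Psi F_{j} + F_{j}^{T} \Psi A) \\ &\quad + \sum_{i,j=1}^{m} u_{i} u_{j}\, F_{j}^{T} \Psi F_{i}.
\end{align*}
Using $\lambda_{\max}(M_{1} + M_{2}) \leq \lambda_{\max}(M_{1}) + \lVert M_{2} \rVert$ for symmetric matrices, the triangle inequality on norms, and $|u_{j}| \leq \lVert u \rVert_{\infty}$, the maximum eigenvalue of the right-hand side is upper bounded by $\gamma \lVert u \rVert_{\infty}^{2} + \alpha \lVert u \rVert_{\infty} + \delta$, where $\alpha \triangleq \sum_{j} \lVert A^{T} \Psi F_{j} + F_{j}^{T} \Psi A \rVert$, $\gamma \triangleq \sum_{i,j} \lVert F_{j}^{T} \Psi F_{i} \rVert$, and $\delta \triangleq \lambda_{\max}(A^{T} \Psi A - \mathcal{W}^{-1}) \leq 0$. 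Solving the quadratic inequality $\gamma t^{2} + \alpha t + \delta \leq 0$ in $t = \lVert u \rVert_{\infty}$ recovers exactly the threshold $\beta/(2\gamma)$ appearing in~\eqref{eq:Theo5_1}.

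Telescoping the per-step inequality $V(x(k+1)) - V(x(k)) \leq u^{T}(k) u(k)$ across $k = 0, \ldots, K-1$ with $x(0) = \mathbf{0}_{n}$ then yields~\eqref{eq:Theo5_2}. The main obstacle is carrying out the Schur-complement reduction and justifying, via the generalized Lyapunov equation, that $I_{m} - B^{T} \mathcal{W}^{-1} B \succ 0$ and $\delta \leq 0$; once that is in place, the eigenvalue bookkeeping and the quadratic-root calculation that recovers the precise form of $\beta$ reduce to routine algebraic manipulations.
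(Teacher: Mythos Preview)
Your proof follows essentially the same route as the paper: define $V(x)=x^{T}\mathcal{W}^{-1}x$, write the one-step increment $V(x(k+1))-V(x(k))-u^{T}(k)u(k)$ as a quadratic form in $(x,u)$, reduce its sign condition via the Schur complement with respect to the block $B^{T}\mathcal{W}^{-1}B-I_{m}$ to obtain $A_{u}^{T}\Psi A_{u}-\mathcal{W}^{-1}\preceq 0$, bound the largest eigenvalue of this matrix by a scalar quadratic in $\lVert u\rVert_{\infty}$, and telescope. Your $A_{u}$ notation is a cosmetic repackaging of the paper's $\Phi_{ij}(k)$ blocks, and the eigenvalue/quadratic-root bookkeeping is identical.

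Where you go beyond the paper is the Woodbury identification $\Psi=(\mathcal{W}-BB^{T})^{-1}$. The paper establishes $B^{T}\mathcal{W}^{-1}B-I_{m}\prec 0$ by multiplying the generalized Lyapunov equation on both sides by $\mathcal{W}^{-1}B$; your route via $\Psi^{-1}=\mathcal{W}-BB^{T}=A\mathcal{W}A^{T}+\sum_{j}F_{j}\mathcal{W}F_{j}^{T}\succeq 0$ is cleaner and, more importantly, immediately yields $\Psi^{-1}-A\mathcal{W}A^{T}=\sum_{j}F_{j}\mathcal{W}F_{j}^{T}\succeq 0$, which by one more Schur complement gives $A^{T}\Psi A\preceq\mathcal{W}^{-1}$, i.e., $\delta\leq 0$. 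The paper explicitly leaves the sign of $\lambda_{\max}(A^{T}\Psi A-\mathcal{W}^{-1})$ unresolved (Remark~\ref{Remark1}), so your argument actually closes that gap at the semidefinite level and guarantees that $\beta$ is always real and nonnegative. Note, however, that this gives only $\delta\leq 0$, not $\delta<0$; strict negativity (and hence a strictly positive input bound) would additionally require $\sum_{j}F_{j}\mathcal{W}F_{j}^{T}\succ 0$.
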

\begin{proof}
  We consider the Lyapunov functional $V(x)=x^{T}\mathcal{W}^{-1}x$ and obtain
  \begin{align}
    & V(x(k+1))-V(x(k))-u^{T}(k)u(k)  \notag \\
    & \quad =\left[
      \begin{array}{c}
        x(k) \\
        u(k)%
      \end{array}%
    \right] ^{T}\left[
      \begin{array}{cc}
        \Phi _{11}(k) & \Phi _{21}^{T}(k) \\
        \Phi _{21}(k) & \Phi _{22}(k)%
      \end{array}%
    \right] \left[
      \begin{array}{c}
        x(k) \\
        u(k)%
      \end{array}%
    \right] ,  \label{eq1:Pf_Theo5}
  \end{align}%
  where
  \begin{align*}
    \Phi _{11}(k)&
    =A^{T}\mathcal{W}^{-1}A-\mathcal{W}^{-1}\mathcal{+}%
    \sum_{i,j=1}^{m}u_{j}(k)F_{j}^{T}\mathcal{W}^{-1}F_{i}u_{i}(k) \\
    & \quad
    +\sum_{j=1}^{m}(A^{T}\mathcal{W}^{-1}F_{j}+F_{j}^{T}\mathcal{W}%
    ^{-1}A)u_{j}(k)\in \mathbb{R}^{n\times n},
    \\
    \Phi _{21}(k)& =B^{T}\mathcal{W}^{-1}A+B^{T}\mathcal{W}^{-1}%
    \sum_{j=1}^{m}F_{j}u_{j}(k)\in \mathbb{R}^{n\times m}, \\
    \Phi _{22}(k)& =B^{T}\mathcal{W}^{-1}B-I_{m}\in
    \mathbb{R}^{m\times m}.
  \end{align*}%
  In the rest of the proof, we show that the matrix $\Phi (k)=[\Phi
  _{ij}(k)]\in \mathbb{R}^{(n+m)\times (n+m)}\leq 0$ under
  (\ref{eq:Theo5_1}).  First, multiplying the generalized Lyapunov
  equation (\ref{eq:Theo_1}) by the vector $\mathcal{W}^{-1}B$ from
  the right-hand side, and by the vector $B^{T}%
  \mathcal{W}^{-1}$ from the left-hand side, we obtain after some
  manipulation
  \begin{align}
    \Phi _{22}(k)&
    =-(B^{T}\mathcal{W}^{-1}B)^{-1}B^{T}\mathcal{W}^{-1}\bigl(A%
    \mathcal{W}A^{T}  \notag \\
    & \quad
    +\sum_{j=1}^{m}F_{j}\mathcal{W}F_{j}^{T}\bigr)\mathcal{W}^{-1}B<0,
    \label{eq2:Pf_Theo5}
  \end{align}%
  where we have used the fact that $\mathcal{W}$ is positive
  definite. Moreover, it follows that
  \begin{align}
    & \Phi _{11}(k)-\Phi _{21}^{T}(k)\Phi _{22}^{-1}(k)\Phi _{21}(k)
    \notag \\
    & \quad =\sum_{i,j=1}^{m}u_{j}(k)F_{j}^{T}\Psi
    F_{i}u_{i}(k)+A^{T}\Psi A-%
    \mathcal{W}^{-1}  \notag \\
    & \quad \quad +\sum_{j=1}^{m}(A^{T}\Psi F_{j}+F_{j}^{T}\Psi
    A)u_{j}(k) \notag
    \\
    & \quad \leq \bigl(\sum_{i,j=1}^{m}\lVert F_{j}^{T}\Psi
    F_{i}\lVert \cdot \lVert u(k)\lVert _{\infty }^{2}+\lambda _{\max
    }(A^{T}\Psi A-\mathcal{W}%
    ^{-1}) \notag
    \\
    & \quad \quad +\sum_{j=1}^{m}\lVert A^{T}\Psi F_{j}+F_{j}^{T}\Psi
    A\lVert \cdot \lVert u(k)\lVert _{\infty }\bigr)I_{n} \leq
    0, \label{eq3:Pf_Theo5}
  \end{align}%
  where the last inequality holds because of (\ref{eq:Theo5_1}). Using
  the Schur complement lemma~\cite{SB-LV:09}, (\ref{eq2:Pf_Theo5}) and
  (\ref{eq3:Pf_Theo5}) imply $\Phi (k)\leq 0$. Finally,
  summing~(\ref{eq1:Pf_Theo5}) with respect to $k=0,1,\cdots ,K-1$ and
  noting $V(x(0))=0$, we get~(\ref{eq:Theo5_2}).
\end{proof}

The sufficient condition (\ref{eq:Theo5_1}) is a magnitude constraint
at every actuator. Theorem \ref{Theo5} provides a reachability
Gramian-based lower bound on the minimum input energy required to
drive the state from the origin to any reachable state.
There are two reasons why this bound
may be conservative. First, instead of considering the sign of the sum
over the entire time horizon $k=0,1,\cdots ,K-1$, the proof's strategy
relies on each individual inequality
\begin{equation*}
  x^{T}(k+1) \mathcal{W}^{-1}x(k+1)-x^{T}(k)\mathcal{W}^{-1}x(k)\leq
  u^{T}(k)u(k)
\end{equation*}%
to hold for every time step $k$. Second, the bounding in
inequality~\eqref{eq3:Pf_Theo5} may introduce conservativeness.

\begin{remark}\longthmtitle{Positivity of the input upper bound
    in~\eqref{eq:Theo5_1}}\label{Remark1}
  From the definition of $\beta $ in Theorem~\ref{Theo5}, it is clear
  that the upper bound in (\ref{eq:Theo5_1}) on the infinity norm of
  the input is positive if and only if the matrix
  $\mathcal{G}(A,F,B)=A^{T}\Psi A-\mathcal{W }^{-1}$ is negative
  definite. We have computed the upper bound for hundreds of randomly
  generated matrix tuples $(A,F,B)$ and they all turn out to be
  positive. However, we have not been able to establish analytically
  the negative definiteness of $ \mathcal{G}$ in general due to its
  complex dependence on $A,F,B$. This fact can be established directly
  for the class of scalar bilinear systems. \oprocend
\end{remark}

\begin{corollary}
  \mbox{}\textit{(Scalar case for Theorem \ref{Theo5}).}\label{Theo4}
  Consider the class of scalar bilinear systems $(a,f,b)$:
  \begin{equation}
    x(k+1)=ax(k)+fx(k)u(k)+bu(k).  \label{BLS:scalar}
  \end{equation}%
  If $\forall k=1,2,\cdots ,K$,
  \begin{equation}
    |u(k)+af^{-1}|\leq \sqrt{a^{2}f^{-2}+1},  \label{ub_control_sca_BL}
  \end{equation}%
  then%
  \begin{equation*}
    \sum_{k=0}^{K-1}u^{2}(k)\geq \mathcal{W}^{-1}x^{2}(K).
  \end{equation*}
\end{corollary}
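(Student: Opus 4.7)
The plan is to rerun the Lyapunov-functional argument of Theorem~\ref{Theo5} in the scalar setting, where every matrix and every spectral estimate collapses to an exact scalar identity; this should bypass the two sources of conservativeness noted after Theorem~\ref{Theo5} and yield the sharp condition~\eqref{ub_control_sca_BL} in place of the triangle-inequality bound~\eqref{eq:Theo5_1}. Throughout I will use the scalar form of the generalized Lyapunov equation from Theorem~\ref{Theo1}, $a^2\mathcal{W} - \mathcal{W} + f^2\mathcal{W} + b^2 = 0$, which gives the key substitution $b^2/\mathcal{W} = 1 - a^2 - f^2$.

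Following the proof of Theorem~\ref{Theo5}, I take the Lyapunov functional $V(x) = \mathcal{W}^{-1}x^2$, substitute the scalar dynamics~\eqref{BLS:scalar}, and expand the telescoping increment as a quadratic form
\begin{equation*}
V(x(k{+}1)) - V(x(k)) - u^2(k) = \begin{bmatrix} x(k) \\ u(k) \end{bmatrix}^T \Phi(k) \begin{bmatrix} x(k) \\ u(k) \end{bmatrix}.
\end{equation*}
A direct computation yields $\Phi_{11}(k) = \mathcal{W}^{-1}((a + fu(k))^2 - 1)$, $\Phi_{21}(k) = \mathcal{W}^{-1}b(a + fu(k))$, and $\Phi_{22}(k) = b^2/\mathcal{W} - 1 = -(a^2 + f^2)$, where the last equality uses the Lyapunov identity.

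Since $\Phi_{22} < 0$ by Schur stability, I apply the Schur complement lemma to reduce $\Phi(k) \leq 0$ to $\Phi_{11}(k) - \Phi_{21}(k)^2/\Phi_{22}(k) \leq 0$. Substituting and invoking $b^2/\mathcal{W} = 1 - a^2 - f^2$ once more, I expect this Schur complement to simplify exactly to $\mathcal{W}^{-1}\bigl((a + fu(k))^2/(a^2+f^2) - 1\bigr)$, which is nonpositive iff $(a + fu(k))^2 \leq a^2 + f^2$; dividing through by $f^2$ recasts this as the hypothesis~\eqref{ub_control_sca_BL}. Summing the quadratic form identity for $V$ over $k = 0,\ldots,K-1$ with $x(0) = 0$ then delivers $\sum_{k=0}^{K-1} u^2(k) \geq \mathcal{W}^{-1} x^2(K)$.

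I do not anticipate a substantial obstacle: the only care point is the algebraic simplification of the Schur complement, and the Lyapunov identity is precisely what makes it collapse to a perfect square. The essence of the argument, and the reason the sufficient condition~\eqref{ub_control_sca_BL} is strictly sharper than what one would obtain by specializing~\eqref{eq:Theo5_1} to $n=m=1$, is that in the scalar case there is no need to further bound $\Phi_{11} - \Phi_{21}^T \Phi_{22}^{-1}\Phi_{21}$ via $\lambda_{\max}(\cdot)$ and the triangle inequality; the nonpositivity of the Schur complement can be characterized exactly in terms of $u(k)$.
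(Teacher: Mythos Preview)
Your proposal is correct and follows essentially the same approach as the paper: both specialize the Lyapunov-functional argument of Theorem~\ref{Theo5} to the scalar case, expand $V(x(k+1))-V(x(k))-u^2(k)$ as a quadratic form in $(x(k),u(k))$, and show it is nonpositive under~\eqref{ub_control_sca_BL} before telescoping. The only cosmetic difference is that the paper, rather than invoking the Schur complement, bounds the quadratic form directly by the perfect square $-(|b|\mathcal{W}^{-1}|x(k)|-\sqrt{a^2+f^2}\,|u(k)|)^2$ after using~\eqref{ub_control_sca_BL} to replace $(a+fu(k))^2$ by $a^2+f^2$; your Schur-complement route is equivalent and arguably cleaner.
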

\begin{proof}
  For a scalar bilinear system (\ref{BLS:scalar}), we immediately have
  $%
  \mathcal{W}=(1-a^{2}-f^{2})^{-1}b^{2}$, either from the reachability
  Gramian definition (\ref{eq:reach_Gramian}) or from the generalized
  Lyapunov equation (\ref{eq:Theo_1}). Using the Lyapunov function
  $V(x)=x^{T} \mathcal{W}^{-1}x$, we obtain after some manipulation,
  \begin{align}
    & V(x(k+1))-V(x(k))-u^{2}(k) \notag
    \\
    \quad & =-(1-(a+fu(k))^{2})\mathcal{W}^{-1}x^{2}(k) \notag
    \\
    & \quad \quad
    +2b(a+fu(k))\mathcal{W}^{-1}u(k)x(k)-(1-b^{2}\mathcal{W}%
    ^{-1})u^{2}(k) \notag
    \\
    \quad & \overset{(a)}{\leq
    }-(|b|\mathcal{W}^{-1}|x(k)|-\sqrt{a^{2}+f^{2}}%
    |u(k)|)^{2}\leq 0, \label{eq1:Pf_Theo4}
  \end{align}%
  where $(a)$ holds because of (\ref{ub_control_sca_BL}). By summing
  inequality (\ref{eq1:Pf_Theo4}) with respect to $k=1,2,\cdots ,K-1$
  and noting that $V(x(0))=0$, the proof is complete.
\end{proof}

We end this section with two examples to complement the result in
Theorem~\ref{Theo5}. First, we show through a counter example that the
inequality~\eqref{eq:Theo5_2} does not hold in general if the input
norm is unconstrained. In fact, there does not exist a global lower
bound for
\begin{align*}
  \frac{\sum_{k=0}^{K-1}u^{T}(k)u(k)}{\lVert x(K)\lVert ^{2}}
\end{align*}
that is strictly greater than $0$.

\begin{example}\longthmtitle{There is no positive global lower bound for $
    \sum_{k=0}^{K-1}u^{T}(k)u(k)/\lVert x(K)\lVert^{2}$}\label{Example2}
  Consider the $2$-step reachability problem for the scalar bilinear
  system $(a,f,1)$,
  \begin{align}
    x(k+1)& =ax(k)+fx(k)u(k)+u(k), \notag
    \\
    x(0)& =0,\quad x(2)=x_{f}.  \label{BLS:scalar_example}
  \end{align}%
  It is easy to obtain from (\ref{BLS:scalar_example}) that
  \begin{equation*}
    u(0)=(a+fu(1))^{-1}(x_{f}-u(1)).
  \end{equation*}%
  By denoting $x_{f}=Mu(1)$ with $M\in $ $\mathbb{R}$, we have for any positive scalar $w$,
  \begin{align*}
    & u^{2}(0)+u^{2}(1)-w^{-1}x_{f}^{2}
    \\
    & \quad
    =\bigl(((a+fu(1))^{-2}-w^{-1})M^{2}-2(a+fu(1))^{-2}M
    \\
    & \quad \quad +1+(a+fu(1))^{-2}\bigr)u^{2}(1).
  \end{align*}%
  Choosing $u(1)$ large enough such that $(a+fu(1))^{2}>w$, there exists $M$ such that
  \begin{equation*}
    u^{2}(0)+u^{2}(1)-w^{-1}x_{f}^{2}<0.
  \end{equation*}
  Therefore, there exists $x_{f}$, $u(0)$, $u(1)$ such that under the
  dynamics~\eqref{BLS:scalar_example}, $u^{2}(0) +
  u^{2}(1)<w^{-1}x_{f}^{2}$ for any $w>0$.
  \oprocend
\end{example}

Our second example illustrates the tightness of the Gramian-based
lower bound~\eqref{eq:Theo5_2} for the input energy functional.

\begin{example}\longthmtitle{Tightness of the Gramian-based lower bound in
    Theorem~\ref{Theo5}}\label{Example3}
  Consider the following single-input bilinear control system taken
  from~\cite{TH-SM:84},
  \begin{equation}
    (A,f,b):x(k+1)=Ax(k)+fu(k)x(k)+bu(k),  \label{eq1:Ex3}
  \end{equation}%
  where
  \begin{align}
    A& =\left[
      \begin{array}{ccccc}
        0 & 0 & 0.024 & 0 & 0 \\
        1 & 0 & -0.26 & 0 & 0 \\
        0 & 1 & 0.9 & 0 & 0 \\
        0 & 0 & 0.2 & 0 & -0.06 \\
        0 & 0 & 0.15 & 1 & 0.5%
      \end{array}%
    \right] ,\quad b=\left[
      \begin{array}{c}
        0.8 \\
        0.6 \\
        0.4 \\
        0.2 \\
        0.5%
      \end{array}%
    \right] ,  \notag \\
    f& =\func{diag}(0.1,0.2,0.3,0.4,0.5).  \notag
  \end{align}%
  We use (\ref{eq:sol_Lya}) to compute the reachability Gramian
  $\mathcal{W}$ as
  \begin{equation*}
    \left[
      \begin{array}{ccccc}
        0.6505 & 0.4572 & 0.4741 & 0.1945 & 0.5342 \\
        0.4572 & 1.2846 & -0.4169 & -0.1165 & -0.3682 \\
        0.4741 & -0.4169 & 6.9412 & 1.1619 & 4.5490 \\
        0.1945 & -0.1165 & 1.1619 & 0.2708 & 0.9262 \\
        0.5342 & -0.3682 & 4.5490 & 0.9262 & 5.2681%
      \end{array}%
    \right] .
  \end{equation*}%
  Inequality (\ref{eq:Theo5_1}) provides an upper bound on $\lVert
  u(k)\rVert _{\infty }$,
  \begin{equation}
    \lVert u(k)\rVert _{\infty }\leq 0.0011.  \label{eq:Example1}
  \end{equation}%
  \begin{figure}[tbh]
    \centering
    \includegraphics[scale=0.5]{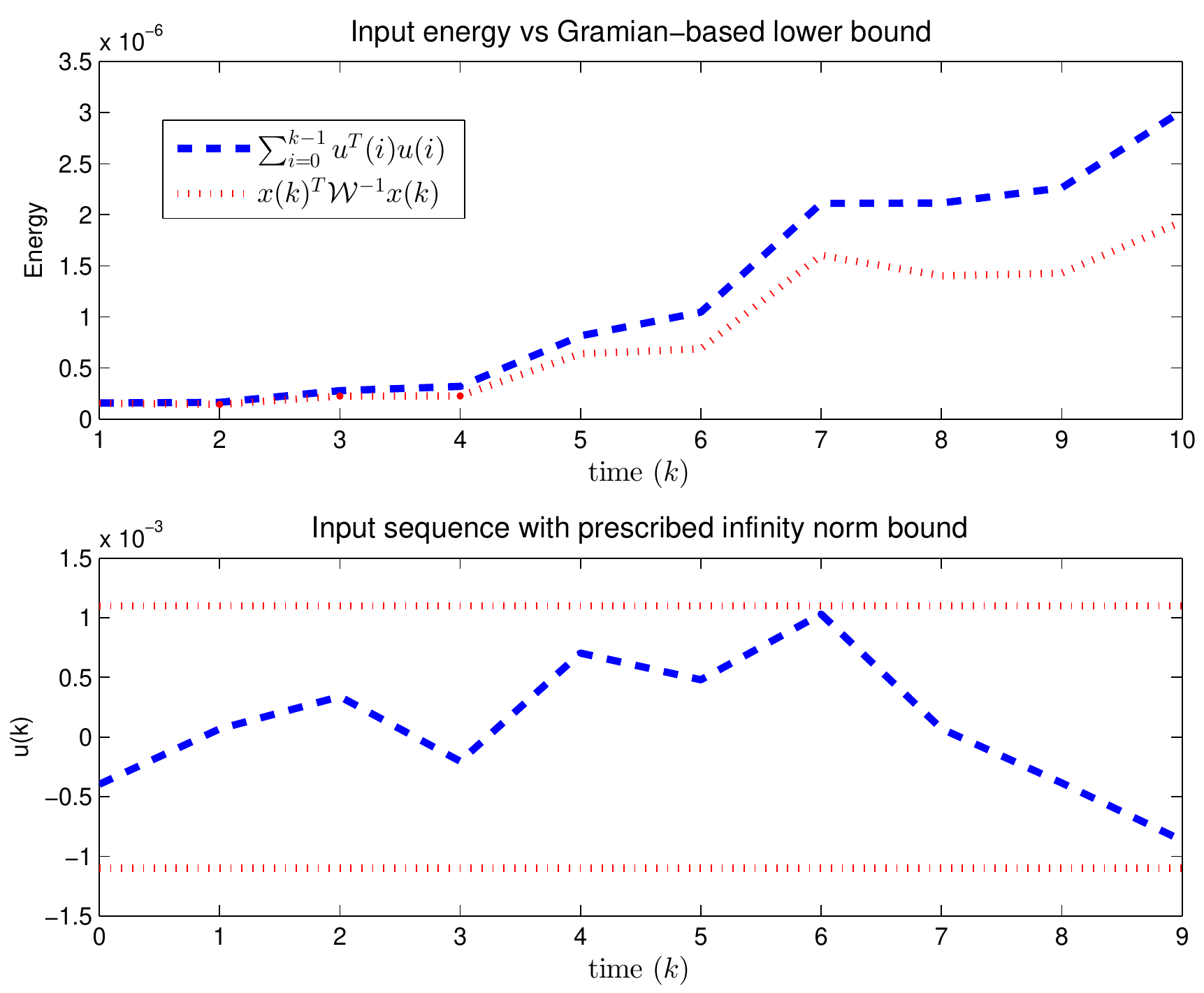}
    \caption{Input energy vs Gramian-based lower bound.}
    \label{Fig1}
  \end{figure}
  Figure~\ref{Fig1} compares the input energy functional $%
  \sum_{i=0}^{k-1}u^{2}(i)$ with the Gramian-based lower bound
  $x^{T}(k)%
  \mathcal{W}^{-1}x(k)$ for $k\leq K=10$ and an arbitrarily chosen
  input sequence $\{u\}^{K-1}$ satisfying~\eqref{eq:Example1}. Since
  the gap between the minimum input energy and the lower bound cannot
  be greater than the one shown in the plot, Figure~\ref{Fig1} shows
  that the Gramian-based lower bound is a good estimate of the minimum
  input energy required to drive the state from the origin to another
  state. \oprocend
\end{example}

\section{Reachability metrics for bilinear
  networks}\label{Sec:act_select}

The inequality~\eqref{eq:Theo5_2} connecting the reachability Gramian
and the minimum energy required to steer the system from the origin to
an arbitrary terminal state allows us to extend the reachability
metrics defined for complex linear systems
in~\cite{FP-SZ-FB:14,GY-JR-YL-CL-BL:12,THS-JL:14} to bilinear control
systems. We therefore consider the minimum eigenvalue, the trace, and
the determinant of the Gramian as reachability
metrics. The minimum eigenvalue $\lambda _{\func{min}}(\mathcal{W})$
characterizes the \emph{minimum input energy required in the worst
  case} to reach a state.  Given the observation,
cf.~\cite{THS-JL:14}, that
\begin{align*}
  \frac{\int_{\{x\in \mathbb{R}^{n}|\lVert x\lVert
      =1\}}x^{T}\mathcal{W}^{-1}xdx}{\int_{\{x\in
      \mathbb{R}^{n}|\lVert x\lVert =1\}}dx} = \frac{\func{tr}
    (\mathcal{W}^{-1})}{n}\geq \frac{n}{\func{tr}(\mathcal{W})},
\end{align*}
the trace $\func{tr}(\mathcal{W})$ characterizes the \emph{average
  minimum control energy} required to reach a state. Finally, the
determinant $\det (\mathcal{W})$ reflects the volume of the ellipsoid
containing the reachable states using inputs with no more than unit
energy.

Formally, our goal is to choose $m$ actuators from a given group of
$M$ candidates ($m\leq M$) such that $\lambda
_{\func{min}}(\mathcal{W})$, $%
\func{tr}(\mathcal{W})$ or $\det (\mathcal{W})$ is maximized,
depending on the specific objective at hand. Denoting $V=\{1,\cdots
,M\}$ and $ S=\{s_{1},\cdots ,s_{m}\}$, we write this combinatorial
optimization problem as%
\begin{equation}
  \max_{S\subseteq V} \; f(\mathcal{W}(S)),  \label{PB:max_tr}
\end{equation}%
where $f:\mathbb{R}^{n\times n}\rightarrow \mathbb{R}_{\geq 0}$ can be
$ \func{tr}(\mathcal{\cdot })$, $\lambda _{\func{min}}(\mathcal{\cdot
})$ or $ \det (\mathcal{\cdot })$. We use $\mathcal{W}(S)$ instead of
$\mathcal{W}$ to indicate its dependence on the choice
of~$S$. Similarly, we denote the input matrices $B$ and $F$ as
$B_{S}=[ b_{s_{1}} \hspace*{1ex} \cdots \hspace*{1ex} b_{s_{m}} ] $
and $F_{S}=[ F_{s_{1}} \hspace*{1ex} \cdots \hspace*{1ex} F_{s_{m}}
]$, respectively, where $b_{i}\in \mathbb{R}^{n},
F_{i}\in\mathbb{R}^{n\times n}$ for each $i\in V$.  In general, the
optimization problem~\eqref{PB:max_tr} is NP-hard, as we justify
below.  The next result shows that the function mapping $S$ to
$\mathcal{W}(S)$ displays the increasing returns property.

\begin{theorem}\longthmtitle{Increasing returns property of the
    function mapping $S$ to $\mathcal{W}(S)$}\label{Theo7}
  For any $S_{1}\subseteq S_{2}\subseteq V$ and $s\in V\backslash
  S_{2}$,
  \begin{equation}
    \mathcal{W}(S_{2}\cup \{s\})-\mathcal{W}(S_{2})\geq \mathcal{W}(S_{1}\cup
    \{s\})-\mathcal{W}(S_{1}).  \label{eq:Theo7}
  \end{equation}
\end{theorem}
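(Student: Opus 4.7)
The plan is to interpret $\mathcal{W}(S)$ as a Neumann series in a positive, monotone linear operator on symmetric matrices, and then compare the two marginal differences $\Delta_{1}\triangleq\mathcal{W}(S_{1}\cup\{s\})-\mathcal{W}(S_{1})$ and $\Delta_{2}\triangleq\mathcal{W}(S_{2}\cup\{s\})-\mathcal{W}(S_{2})$ term-by-term in the semi-definite order. Define the positive linear operator $\mathcal{T}_{S}(X)\triangleq AXA^{T}+\sum_{i\in S}F_{i}XF_{i}^{T}$ on symmetric matrices. Theorem~\ref{Theo1} rephrases as $\mathcal{W}(S)=\mathcal{T}_{S}(\mathcal{W}(S))+B_{S}B_{S}^{T}$, and under the spectral-radius hypothesis~(\ref{ineq:exist_W}) this fixed-point equation admits the convergent expansion
\begin{equation*}
\mathcal{W}(S)=\sum_{k=0}^{\infty}\mathcal{T}_{S}^{k}(B_{S}B_{S}^{T}).
\end{equation*}
Two properties of $\mathcal{T}_{S}$ will drive the argument: it preserves the semi-definite order in its argument, and whenever $S\subseteq S'$ one has $\mathcal{T}_{S}(X)\leq\mathcal{T}_{S'}(X)$ for every positive semi-definite $X$, together with $B_{S}B_{S}^{T}\leq B_{S'}B_{S'}^{T}$. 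A straightforward induction combining these facts yields the basic monotonicity $\mathcal{W}(S)\leq\mathcal{W}(S')$ whenever $S\subseteq S'$, and also ensures the Neumann series converges for every $S\subseteq V$.

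Next, subtracting the generalized Lyapunov equation of Theorem~\ref{Theo1} for $\mathcal{W}(S_{k})$ from that for $\mathcal{W}(S_{k}\cup\{s\})$, and isolating the contribution of the newly added actuator, I obtain for $k=1,2$ the fixed-point relation
\begin{equation*}
\Delta_{k}=\mathcal{T}_{S_{k}}(\Delta_{k})+F_{s}\mathcal{W}(S_{k}\cup\{s\})F_{s}^{T}+b_{s}b_{s}^{T},
\end{equation*}
whose resolvent representation is
\begin{equation*}
\Delta_{k}=\sum_{j=0}^{\infty}\mathcal{T}_{S_{k}}^{j}\bigl(F_{s}\mathcal{W}(S_{k}\cup\{s\})F_{s}^{T}+b_{s}b_{s}^{T}\bigr).
\end{equation*}

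Finally I compare these two series. Because $S_{1}\subseteq S_{2}$, induction on $j$ together with the two monotonicity properties of $\mathcal{T}$ noted above gives $\mathcal{T}_{S_{1}}^{j}(Y)\leq\mathcal{T}_{S_{2}}^{j}(Y)$ for every positive semi-definite $Y$ and every $j\geq 0$. Because $S_{1}\cup\{s\}\subseteq S_{2}\cup\{s\}$, the basic monotonicity of $\mathcal{W}$ gives $F_{s}\mathcal{W}(S_{1}\cup\{s\})F_{s}^{T}\leq F_{s}\mathcal{W}(S_{2}\cup\{s\})F_{s}^{T}$. Chaining these two inequalities summand by summand and adding over $j\geq 0$ produces $\Delta_{1}\leq\Delta_{2}$, which is precisely~(\ref{eq:Theo7}). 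I expect the only delicate point to be keeping the three nested monotonicities---$\mathcal{T}_{S}$ monotone in its argument, $\mathcal{T}_{S}$ monotone in $S$, and $\mathcal{W}$ monotone in $S$---cleanly separated so each step of the chaining is justified; beyond invoking Theorem~\ref{Theo1} and the convergence condition~(\ref{ineq:exist_W}), no new computation is required.
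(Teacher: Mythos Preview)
Your argument is correct, and it takes a genuinely different route from the paper's own proof.

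The paper works directly with the series decomposition $\mathcal{W}(S)=\sum_{i\geq 1}\mathcal{W}_{i}(S)$ from~(\ref{eq:reach_Gramian}). It first observes that $\mathcal{W}_{1}(S)=\sum_{s\in S}\mathcal{W}_{1}(\{s\})$ is additive, so the marginal at level $i=1$ is the same for $S_1$ and $S_2$. For $i\geq 2$ it expands $\mathcal{W}_{i}(S)$, via the recursion~(\ref{eq:W_i_W_i-1}), as a double sum over indices $j_{1},j_{2}\in S$ and computes the second difference $\mathcal{W}_{i}(S_{2}\cup\{s\})-\mathcal{W}_{i}(S_{2})-\mathcal{W}_{i}(S_{1}\cup\{s\})+\mathcal{W}_{i}(S_{1})$ explicitly, exhibiting it as a sum of terms of the form $A^{k}F_{j}(\cdot)F_{j}^{T}(A^{T})^{k}\geq 0$. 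Summing over $i$ gives the result.

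Your approach instead passes through the generalized Lyapunov equation and treats $\mathcal{W}(S)$ as the resolvent $(I-\mathcal{T}_{S})^{-1}(B_{S}B_{S}^{T})$ of the completely positive map $\mathcal{T}_{S}$. The key step, subtracting the two Lyapunov equations to get a fixed-point equation for $\Delta_{k}$ with source term $F_{s}\mathcal{W}(S_{k}\cup\{s\})F_{s}^{T}+b_{s}b_{s}^{T}$, is a clean device that isolates exactly what changes when the actuator $s$ is added. The comparison then reduces to two transparent monotonicity facts about positive maps, rather than index bookkeeping. A side benefit is that the basic monotonicity $\mathcal{W}(S)\leq\mathcal{W}(S')$ for $S\subseteq S'$ falls out along the way, and your observation that convergence of the Neumann series for $V$ forces convergence for every subset (by domination of the partial sums) is a point the paper leaves implicit. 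The paper's computation is more elementary in that it never leaves the explicit series~(\ref{eq:reach_Gramian}); your argument is more structural and would transfer unchanged to any Gramian defined through a positive-map Lyapunov equation.
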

\begin{proof}
  Without loss of generality, we relabel the elements in $V$ such that
  $%
  S_{1}=\{1,\cdots ,m_{1}\}$, $S_{2}=\{1,\cdots ,m_{1}+m_{2}\}$ and $%
  s=m_{1}+m_{2}+1$ with $m_{1}\geq 1$, $m_{2}\geq 0$ and
  $m_{1}+m_{2}+1\leq |V|=M$. For any $S=\{s_{1},\cdots
  ,s_{m}\}\subseteq V$, we have%
  \begin{align}
    \mathcal{W}_{1}(S)& =\sum_{k=0}^{\infty
    }A^{k}B_{S}B_{S}^{T}(A^{T})^{k}
    \notag \\
    & =\sum_{j=1}^{m}\sum_{k=0}^{\infty
    }A^{k}b_{s_{j}}b_{s_{j}}^{T}(A^{T})^{k} =\sum_{s\in
      S}\mathcal{W}_{1}(s), \label{eq1:PfTheo6}
  \end{align}
  which implies that
  \begin{align}
    \mathcal{W}_{1}(S_{2}\cup \{s\})-\mathcal{W}_{1}(S_{2})
    &=\mathcal{W}_{1}(s) \notag
    \\
    &=\mathcal{W}_{1}(S_{1}\cup \{s\})-\mathcal{W}_{1}(S_{1}).
    \label{eq1:PfTheo7}
  \end{align}%
  For $i\geq 2$, using the last equation in~\eqref{eq:W_i_W_i-1} and
  cancelling terms in common, we obtain
  \begin{align*}
    & \mathcal{W}_{i}(S_{2}\cup
    \{s\})-\mathcal{W}_{i}(S_{2})-\mathcal{W}%
    _{i}(S_{1}\cup \{s\})+\mathcal{W}_{i}(S_{1}) \\
    & \quad =\sum_{j_{1},j_{2}=1}^{m_{1}+m_{2}+1}\sum_{k=0}^{\infty
    }A^{k}F_{j_{2}}\mathcal{W}_{i-1}(s_{j_{1}})F_{j_{2}}^{T}(A^{T})^{k} \\
    & \quad \quad
    -\sum_{j_{1},j_{2}=1}^{m_{1}+m_{2}}\sum_{k=0}^{\infty
    }A^{k}F_{j_{2}}\mathcal{W}_{i-1}(s_{j_{1}})F_{j_{2}}^{T}(A^{T})^{k} \\
    & \quad \quad -\sum_{j_{1},j_{2}\in S_{1}\cup
      \{s\}}\sum_{k=0}^{\infty
    }A^{k}F_{j_{2}}\mathcal{W}_{i-1}(s_{j_{1}})F_{j_{2}}^{T}(A^{T})^{k} \\
    & \quad \quad +\sum_{j_{1},j_{2}=1}^{m_{1}}\sum_{k=0}^{\infty
    }A^{k}F_{j_{2}}%
    \mathcal{W}_{i-1}(s_{j_{1}})F_{j_{2}}^{T}(A^{T})^{k} \\
    & \quad =\sum_{j=m_{1}+1}^{m_{2}}\sum_{k=0}^{\infty
    }A^{k}F_{m_{1}+m_{2}+1}%
    \mathcal{W}_{i-1}(s_{j})F_{m_{1}+m_{2}+1}^{T}(A^{T})^{k} \\
    & \quad \quad +\sum_{j=m_{1}+1}^{m_{2}}\sum_{k=0}^{\infty
    }A^{k}F_{j}%
    \mathcal{W}_{i-1}(s_{m_{1}+m_{2}+1})F_{j}^{T}(A^{T})^{k} \\
    & \quad \geq 0.
  \end{align*}%
  The proof then follows using the definition~\eqref{eq:reach_Gramian}
  of $%
  \mathcal{W}$.
\end{proof}


Since the trace function is linear, Theorem~\ref{Theo7} immediately
implies that $\func{tr}(\mathcal{W})$\ is a supermodular function.
For linear time-invariant systems, the inequality in (\ref{eq:Theo7})
becomes an equality. This can be seen from the proof of
Theorem~\ref{Theo7} or found in~\cite{THS-JL:14}. Supermodularity in
combinatorial optimization of functions over subsets is analogous to
convexity in optimization of functions defined over Euclidean
space. The maximization of supermodular functions under cardinality
constraints is known to be NP-hard, but its Lagrangian dual and its
continuous relaxation can be solved in polynomial
time~\cite{GG-BS:89}, which provides an upper bound on the optimal
value.  On the other hand, a lower bound on the optimal value follows
from the following result.

\begin{corollary}
For any $S_{1}\subseteq S_{2}\subseteq V$ and $S_{3}\subseteq V\backslash
S_{2}$,
\begin{equation}
\mathcal{W}(S_{2}\cup S_{3})-\mathcal{W}(S_{2})\geq \mathcal{W}(S_{1}\cup
S_{3})-\mathcal{W}(S_{1}).  \label{eq:Cor1}
\end{equation}
\end{corollary}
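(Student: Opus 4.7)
The plan is to deduce this corollary from Theorem~\ref{Theo7} by a straightforward induction on the cardinality $|S_{3}|$. The underlying intuition is that~\eqref{eq:Cor1} is just the ``subset version'' of the single-element increasing-returns statement~\eqref{eq:Theo7}, and passing between the two is standard: add the elements of $S_{3}$ to $S_{1}$ and to $S_{2}$ one at a time, applying Theorem~\ref{Theo7} at each step to the current pair of nested sets, and telescope.

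More concretely, I would set up the induction with base case $|S_{3}|=0$ (trivially $0\ge 0$) and inductive step as follows. Assume the corollary holds whenever the third set has cardinality at most $k$, and let $S_{3}\subseteq V\setminus S_{2}$ with $|S_{3}|=k+1$. Pick any $s\in S_{3}$ and set $S_{3}'=S_{3}\setminus\{s\}$. Since $s\in S_{3}\subseteq V\setminus S_{2}$ and $s\notin S_{3}'$, we have $s\in V\setminus(S_{2}\cup S_{3}')$, and clearly $S_{1}\cup S_{3}'\subseteq S_{2}\cup S_{3}'$. Theorem~\ref{Theo7} applied to the nested pair $S_{1}\cup S_{3}'\subseteq S_{2}\cup S_{3}'$ and the element $s$ yields
\begin{equation*}
\mathcal{W}(S_{2}\cup S_{3})-\mathcal{W}(S_{2}\cup S_{3}') \geq \mathcal{W}(S_{1}\cup S_{3})-\mathcal{W}(S_{1}\cup S_{3}') .
\end{equation*}
Meanwhile, $S_{3}'\subseteq V\setminus S_{2}$ with $|S_{3}'|=k$, so the induction hypothesis gives
\begin{equation*}
\mathcal{W}(S_{2}\cup S_{3}')-\mathcal{W}(S_{2}) \geq \mathcal{W}(S_{1}\cup S_{3}')-\mathcal{W}(S_{1}) .
\end{equation*}
Adding these two matrix inequalities produces~\eqref{eq:Cor1} and closes the induction.

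There is no real obstacle here: the only point requiring any care is checking that Theorem~\ref{Theo7} is legitimately applicable at each step, i.e., that the singleton we add stays outside the larger of the two current nested sets. This is immediate from the hypothesis $S_{3}\subseteq V\setminus S_{2}$, which is preserved under removing elements from $S_{3}$. Since the inequalities produced are all in the positive semidefinite order, summing them is valid and no ordering or normalization issue arises.
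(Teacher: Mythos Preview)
Your proof is correct and is essentially the same argument as the paper's: both add the elements of $S_{3}$ one at a time, apply Theorem~\ref{Theo7} at each step to the current nested pair, and telescope. The only cosmetic difference is that the paper enumerates $S_{3}=\{s_{1},\dots,s_{|S_{3}|}\}$ and sums the resulting chain of inequalities directly, whereas you package the same telescoping as an induction on $|S_{3}|$.
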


\begin{proof}
Denote $S_{3}=\{s_{1},\cdots ,s_{\left\vert S_{3}\right\vert }\}$, it
follows immediately from~\eqref{eq:Theo7} that for any $i=1,\ldots
,\left\vert S_{3}\right\vert $,%
\begin{align}
& \mathcal{W}(S_{2}\cup \{s_{1},\cdots ,s_{i}\})-\mathcal{W}(S_{2}\cup
\{s_{1},\cdots ,s_{i-1}\}) \notag \\
& \geq \mathcal{W}(S_{1}\cup \{s_{1},\cdots
,s_{i}\})-\mathcal{W}(S_{1}\cup \{s_{1},\cdots ,s_{i-1}\}).
\label{eq1:Pf_Cor1}
\end{align}%
By summing inequality \eqref{eq1:Pf_Cor1} with respect to $i=1,\ldots
,\left\vert S_{3}\right\vert $, we obtain \eqref{eq:Cor1}.
\end{proof}

\begin{theorem}
\longthmtitle{Lower bound on reachability metrics}\label{Theo9}
  Let $f:\mathbb{R}^{n\times n}\rightarrow \mathbb{R}_{\geq 0}$ be either $\func{tr}$,
 $\lambda _{\func{min}}$ or $\det $. Then for any set $S$ of $m$ actuators
  \begin{equation}
    f(\mathcal{W}(S))\geq \sum_{i=1}^{N}f(\mathcal{W}(S_{i})),  \label{eq:Theo9}
  \end{equation}%
where $S_{1},\cdots ,S_{N}$ is any partition of $S$.
\end{theorem}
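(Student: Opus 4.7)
The plan is to first establish a positive semidefinite (Loewner) ordering $\mathcal{W}(S) \geq \sum_{i=1}^{N} \mathcal{W}(S_i)$ among the Gramians, and then transfer this matrix inequality to the scalar inequality~\eqref{eq:Theo9} for each of the three choices of $f$.

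For the matrix step, I will iteratively apply the inequality~\eqref{eq:Cor1}. Let $T_0 := \emptyset$ and $T_i := S_1 \cup \cdots \cup S_i$ for $i = 1,\ldots,N$. Observe first that $\mathcal{W}(\emptyset) = \mathbf{0}_{n\times n}$: with no actuator selected, the generalized Lyapunov equation~\eqref{eq:Theo_1} collapses to $A \mathcal{W} A^T - \mathcal{W} = \mathbf{0}_{n\times n}$, whose unique solution is $\mathbf{0}_{n\times n}$ since $A$ is Schur stable. Now apply~\eqref{eq:Cor1} with the assignments $S_1 \leftarrow \emptyset$, $S_2 \leftarrow T_{i-1}$, $S_3 \leftarrow S_i$; the disjointness hypothesis $S_i \cap T_{i-1} = \emptyset$ holds because $\{S_i\}$ partitions $S$. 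This gives $\mathcal{W}(T_i) - \mathcal{W}(T_{i-1}) \geq \mathcal{W}(S_i)$ for each $i$, and telescoping from $i=1$ to $N$ yields $\mathcal{W}(S) = \mathcal{W}(T_N) \geq \sum_{i=1}^{N}\mathcal{W}(S_i)$ in the PSD order.

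For the scalar step, the trace case is immediate: the trace is linear and monotone on the PSD cone, so $\func{tr}(\mathcal{W}(S)) \geq \func{tr}(\sum_{i}\mathcal{W}(S_i)) = \sum_{i}\func{tr}(\mathcal{W}(S_i))$. For $\lambda_{\func{min}}$, I will combine Weyl's monotonicity, $A \geq B \geq 0 \Rightarrow \lambda_{\func{min}}(A) \geq \lambda_{\func{min}}(B)$, with the superadditivity $\lambda_{\func{min}}(X+Y) \geq \lambda_{\func{min}}(X) + \lambda_{\func{min}}(Y)$ on the PSD cone. The latter follows from the Rayleigh quotient: for a unit vector $v$ attaining $\lambda_{\func{min}}(X+Y)$, the decomposition $v^T(X+Y)v = v^T X v + v^T Y v \geq \lambda_{\func{min}}(X) + \lambda_{\func{min}}(Y)$ gives the claim. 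A simple induction on $N$ then yields $\lambda_{\func{min}}(\mathcal{W}(S)) \geq \sum_{i}\lambda_{\func{min}}(\mathcal{W}(S_i))$.

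For the determinant case, the plan is to chain three classical facts: (a) determinant monotonicity on the PSD cone, $A \geq B \geq 0 \Rightarrow \det A \geq \det B$; (b) Minkowski's determinant inequality $\det(X+Y)^{1/n} \geq \det(X)^{1/n} + \det(Y)^{1/n}$ for PSD $X,Y\in \mathbb{R}^{n\times n}$; and (c) the elementary inequality $(\sum_{i} a_i)^n \geq \sum_{i} a_i^n$ for non-negative reals $a_i$, which follows because the multinomial expansion of the left-hand side contains each $a_i^n$ as one of its non-negative terms. Applying (b) inductively to the matrices $\mathcal{W}(S_i)$ and then (c) with $a_i := \det(\mathcal{W}(S_i))^{1/n}$ yields $\det(\sum_{i}\mathcal{W}(S_i)) \geq \sum_{i}\det(\mathcal{W}(S_i))$; combining this with (a) applied to the Loewner inequality from the matrix step completes the argument. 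I expect this determinant case to be the main obstacle, since it is the only one that requires a non-trivial chaining of three separate inequalities rather than a one-step deduction from PSD monotonicity.
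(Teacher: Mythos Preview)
Your proposal is correct and follows essentially the same route as the paper: obtain the Loewner inequality $\mathcal{W}(S)\ge\sum_i\mathcal{W}(S_i)$ from the corollary with $S_1=\emptyset$, then push it to each scalar metric via linearity of the trace, the Rayleigh-quotient characterization of $\lambda_{\min}$, and Minkowski's determinant inequality. The only cosmetic difference is in the determinant step: the paper invokes the already-weakened form $\det(A+B)\ge\det A+\det B$ directly, whereas you derive it from the $n$-th-root Minkowski inequality together with $(\sum_i a_i)^n\ge\sum_i a_i^n$; your version has the small advantage of being explicit about determinant monotonicity on the PSD cone, which the paper uses tacitly.
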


\begin{proof}
By letting $S_{1}=\emptyset $ in~\eqref{eq:Cor1} and using the fact that $%
\mathcal{W}(\emptyset )=\mathbf{0}_{n\times n}$, it holds immediately that
for any subset $S\subseteq V$, $\mathcal{W}(S)\geq \sum_{i=1}^{N}\mathcal{W}%
(S_{i})$. This directly implies that $\func{tr}(\mathcal{W}(S))\geq
\sum_{i=1}^{N}\func{tr}(\mathcal{W}(S_{i}))$. Moreover,
  \begin{align*}
    \lambda _{\func{min}}(\mathcal{W}(S))& =\min_{\lVert x\lVert =1}x^{T}%
    \mathcal{W}(S)x \\
    & \geq \min_{\lVert x\lVert =1}x^{T}\bigl(\sum_{i=1}^{N}\mathcal{W}(S_{i})%
    \bigr)x \\
    & \geq \sum_{i=1}^{N}\min_{\lVert x\lVert =1}x^{T}\mathcal{W}%
    (S_{i})x=\sum_{i=1}^{N}\lambda _{\func{min}}(\mathcal{W}(S_{i})).
  \end{align*}%
Finally, employing the Minkowski's determinant inequality~\cite{RAH-CRJ:85}
(if $A$, $B\in \mathbb{R}^{n\times n}$ are positive definite matrices, then $%
\det (A+B)\geq \det (A)+\det (B)$) repeatedly, we obtain
  \begin{equation*}
    \det (\mathcal{W}(S))\geq \sum_{i=1}^{N}\det (\mathcal{W}(S_{i}))
  \end{equation*}%
and the proof is complete.
\end{proof}


To maximize the lower bound in~\eqref{eq:Theo9}, one simply needs to
compute $f(\mathcal{W}(s))$ individually for every $s\in V$, order the
results in decreasing order, and select the actuators sequentially
starting with the one with largest value.  We refer to this procedure
as the \emph{greedy algorithm}.  The following example illustrates its
performance.

\begin{example}\longthmtitle{Controller selection via the greedy
    algorithm}\label{Example4}
  Consider an augmented bilinear control system based on the model in
  Example~\ref{Example3},
  \begin{equation*}
    x(k+1) = A x(k)+\sum_{j\in S}(F_{j}x(k)+B_{j})u_{j}(k),
  \end{equation*}
  where $A$, $B_{0}=b$ and $F_{0}=f$ are the same as those given in
  (\ref{eq1:Ex3}).
  The other actuator candidates are $(F_{j},B_{j})$, where $B_{j}$ is
  the $j$-th canonical vector in $\mathbb{R}^{5}$ for $j \in
  \{1,2,3\}$. We let $ F_{1}(1,2)=F_{1}(2,3)=0.02$, $F_{2}(2,5)=0.01$,
  $F_{2}(4,2)=0.05$,
  and $ F_{3}(1,1)=0.05$, $F_{3}(4,5)=0.02$, with all the other
  elements in $F_{j}$ being zero, for $j \in
  \{1,2,3\}$. Table~\ref{table1} shows their individual and combined
  contributions to $\lambda _{\func{min}}(\mathcal{W}(S))$, $
  \func{tr}(\mathcal{W}(S))$, and $\det (\mathcal{W}(S))$.
  \begin{table*}[tbh]
    \caption{Contribution of chosen sets of actuators to the Gramian-based
      reachability metrics.}\label{table1}
    \centering%
    \begin{tabular}{||c|c|c|c||c|c|c|c||}
      \hline
      $S$ & $\func{tr}(W(S))$ & $\lambda _{\func{min}}(W(S))$ & $\det
      (W(S))$ & $S$
      & $\func{tr}(W(S))$ & $\lambda _{\func{min}}(W(S))$ & $\det
      (W(S))$
      \\
      \hline
      $\{0\}$ & $14.42$ & $0.027$ & $0.242$ & $\{0,2\}$ & $19.91$ &
      $0.07$ & $3.32$
      \\
      \hline
      $\{1\}$ & $5.03$ & $0.023$ & $0.025$ & $\{0,3\}$ & $18.69$ &
      $0.05$ & $1.13$
      \\ \hline
      $\{2\}$ & $4.04$ & $3\times 10^{-5}$ & $9\times 10^{-7}$ & $\{0,1,2\}$ & $%
      26.50$ & $0.137$ & $46.15$
      \\
      \hline
      $\{3\}$ & $3.03$ & $1.6\times 10^{-6}$ & $4\times 10^{-11}$ &
      $\{0,1,3\}$ & $%
      25.28$ & $0.125$ & $28.68$
      \\
      \hline
      $\{0,1\}$ & $20.98$ & $0.09$ & $11.704$ & $\{0,2,3\}$ & $24.19$
      & $0.103$ & $
      8.34$
      \\
      \hline
    \end{tabular}%
  \end{table*}
  We make the following observations:
  \begin{enumerate}
  \item Actuators with a large individual contribution provide a large
    combinatorial contribution. This fact suggests that the greedy
    algorithm is a sensible strategy, even though $\sum_{s\in
      S}f(\mathcal{W}(s))$ can be considerably smaller than
    $f(\mathcal{W}(S))$ for $f=\lambda _{\func{min}}$ and $\det$.

  \item For $f = \func{tr}$, $\sum_{s\in S}\func{tr}(\mathcal{W}(s))$
    is a good estimate of $\func{%
      tr}(\mathcal{W}(S))$. For example,%
    \begin{align*}
      & \frac{\func{tr}(\mathcal{W}(\{0,1\}))-\sum_{s\in
          \{0,1\}}\func{tr}(%
        \mathcal{W}(s))}{\func{tr}(\mathcal{W}(\{0,1\}))} \\
      & \quad =\frac{20.98-14.42-5.03}{20.98} =0.073,
    \end{align*}%
    i.e., a relative error less than $8\%$.
  \end{enumerate}
  As an example, for the case $m=2$, the greedy algorithm will
  select~$\{0,1\}$, which is the optimal choice.  We have observed
  similar results for various simulations of this example with several
  sets of randomly generated $(B_{j},F_{j})$. \oprocend
\end{example}

\section{Addition of bilinear inputs to linear symmetric
  networks}\label{Sec:imp_bl}

In this section, we examine the effect that the addition of bilinear
inputs has on the worst-case minimum input energy for
difficult-to-control linear networks. We begin by formalizing this
notion.

\begin{definition}\longthmtitle{Difficult-to-control
    networks}\label{Def:dtc_network}
  A class of networks is said to be difficult to control (DTC) if, for a
  fixed number of control inputs, the normalized worst-case minimum
  energy grows unbounded with the scale of the network, i.e.,
  \begin{align*}
\lim_{n\rightarrow \infty }\sup_{x_{f}\in \mathbb{R}^{n}}\inf_{\substack{ %
\{u\}^{\infty }:u(k) \in \mathbb{R}^{m}, \\ x(\infty )=x_{f}}}\frac{\lVert
\{u\}^{\infty }\rVert ^{2}}{\lVert x_{f}\rVert ^{2}}\rightarrow \infty ,
  \end{align*}
where $x(\infty) \triangleq \lim_{k\rightarrow \infty}x(k)$.
\end{definition}
\smallskip

For linear networks $(A(n),\mathbf{0}_{n\times nm},B(n))$, one can see
from~\eqref{eq:LTI_Gra_LB} that
\begin{equation*}
  \sup_{x_{f}\in \mathbb{R}^{n}:\lVert x_{f}\rVert ^{2}=1}\inf_{\substack{ %
\{u\}^{\infty }:u(k) \in \mathbb{R}^{m}, \\ x(\infty )=x_{f}}}\lVert \{u\}^{\infty }\rVert ^{2}=\lambda _{\min
  }^{-1}(\mathcal{W}_{1}(n)) .
\end{equation*}
Therefore, if the linear network is difficult to control, this implies
that the minimum eigenvalue of the reachability Gramian approaches $0$
as $n$ grows. A typical class of difficult-to-control linear networks
is the class of stable and symmetric networks for which,
cf.~\cite[Corollary 3.2]{FP-SZ-FB:14}, the worst-case minimum input
energy grows exponentially with rate $\frac{n}{m}$ for any choice of
$B(n)\in \mathbb{R}^{n \times m}$ whose columns are canonical vectors
in $\mathbb{R}^{n}$ (here, $m$ is the number of control inputs or
control nodes).


Our first result of this section shows that difficult-to-control
linear symmetric networks remain so after the addition of a finite number of
bilinear inputs.

\begin{theorem}\longthmtitle{Difficult-to-control linear symmetric networks
    remain so after granted the ability to control a finite number of
    interconnections}\label{Theo6}
  Consider a class of difficult-to-control linear symmetric networks
  $(A(n),\mathbf{0}_{n\times nm},B(n))$.  The class of bilinear
  networks $(A(n),F(n),B(n))$ is also DTC if the number of nonzero
  entries in the matrix $F(n)\in \mathbb{R}^{n\times nm}$ and $\rVert
  F(n)\rVert _{\max }\triangleq \max_{i,j}\left\vert
    F_{ij}(n)\right\vert $ are uniformly bounded with respect to $n$.
\end{theorem}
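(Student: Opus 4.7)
The strategy is to first show that $\lambda_{\func{min}}(\mathcal{W}(n))\to 0$ as $n\to\infty$ and then combine this with Theorem~\ref{Theo5}, applied to a suitably rescaled target state, to deduce the DTC property. Once the eigenvalue bound is available, let $v^{\ast}(n)$ be a unit eigenvector of $\mathcal{W}(n)$ associated with $\lambda_{\func{min}}(\mathcal{W}(n))$ and set $x_{f}=\epsilon v^{\ast}(n)$ with $\epsilon=\bar{u}(n)\sqrt{\lambda_{\func{min}}(\mathcal{W}(n))}$, where $\bar{u}(n)$ is the upper bound in~\eqref{eq:Theo5_1}. For any input sequence driving the state from the origin to $x_{f}$, either $\lVert u\rVert_{\infty}\leq\bar{u}$, so that Theorem~\ref{Theo5} yields $\sum_{k}u^{T}(k)u(k)\geq x_{f}^{T}\mathcal{W}^{-1}x_{f}=\epsilon^{2}/\lambda_{\func{min}}(\mathcal{W}(n))$, or some sample exceeds $\bar{u}$, so that $\sum_{k}u^{T}(k)u(k)>\bar{u}^{2}=\epsilon^{2}/\lambda_{\func{min}}(\mathcal{W}(n))$ by the choice of $\epsilon$. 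In either case, $\lVert u\rVert^{2}/\lVert x_{f}\rVert^{2}\geq 1/\lambda_{\func{min}}(\mathcal{W}(n))\to\infty$, which delivers the DTC property.

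The core step is therefore the upper bound $\lambda_{\func{min}}(\mathcal{W}(n))\to 0$. I would derive it from a uniform trace bound on $\mathcal{W}(n)$. Taking the trace of the generalized Lyapunov equation~\eqref{eq:Theo_1} and using $\func{tr}(PQ)\leq\lVert P\rVert\func{tr}(Q)$ for positive semidefinite $P,Q$ yields
\begin{equation*}
\func{tr}(\mathcal{W}(n))\bigl(1-\lVert A(n)\rVert^{2}-\sum_{j=1}^{m}\lVert F_{j}(n)\rVert^{2}\bigr)\leq\func{tr}(B(n)B(n)^{T}).
\end{equation*}
The theorem's hypotheses give $\sum_{j}\lVert F_{j}(n)\rVert^{2}\leq\lVert F(n)\rVert_{F}^{2}$ uniformly bounded (the product of the bounded number of nonzero entries and bounded $\lVert F(n)\rVert_{\max}^{2}$); the symmetric DTC class has $\lVert A(n)\rVert=\rho(A(n))\leq\rho<1$ uniformly; and $B(n)$ has canonical columns for this class, giving $\func{tr}(B(n)B(n)^{T})=m$. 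Provided $\lVert A\rVert^{2}+\sum_{j}\lVert F_{j}\rVert^{2}<1$ uniformly in $n$, the coefficient on the left is bounded below by a positive constant, so $\func{tr}(\mathcal{W}(n))\leq C$ with $C$ independent of $n$. Since $\mathcal{W}(n)\in\mathbb{R}^{n\times n}$ is positive definite, it follows immediately that $\lambda_{\func{min}}(\mathcal{W}(n))\leq\func{tr}(\mathcal{W}(n))/n\leq C/n\to 0$.

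The main obstacle is precisely establishing the uniform strict inequality $\lVert A\rVert^{2}+\sum_{j}\lVert F_{j}\rVert^{2}<1$. The standing Gramian-existence condition~\eqref{ineq:exist_W} only ensures the weaker spectral-radius inequality $\rho(A\otimes A+\sum_{j}F_{j}\otimes F_{j})<1$, so a dedicated argument is needed. I would bridge this gap by exploiting the hypotheses: since $\lVert F_{j}\rVert\leq\lVert F_{j}\rVert_{F}$ is controlled by the bounded number of nonzero entries and bounded $\lVert F\rVert_{\max}$, provided the product is sufficiently small relative to $1-\rho^{2}$ the spectral-norm condition is inherited from the symmetric Schur-stability of $A(n)$. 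A secondary technical step is verifying that $\bar{u}(n)>0$, i.e., the negative definiteness of $\mathcal{G}(A,F,B)$ flagged in Remark~\ref{Remark1}; once the trace bound is in hand, this can be inspected directly from the definition of $\bar{u}$ using $\lVert A(n)\rVert=\rho(A(n))<1$ and the norm bound on $F$.
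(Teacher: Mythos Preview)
Your approach is genuinely different from the paper's and, as you yourself flag, it rests on two steps that cannot be closed from the stated hypotheses.

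The most serious issue is the uniform strict inequality $\lVert A(n)\rVert^{2}+\sum_{j}\lVert F_{j}(n)\rVert^{2}<1$. The theorem's hypotheses bound only the number of nonzero entries of $F(n)$ and $\lVert F(n)\rVert_{\max}$; together these bound $\lVert F(n)\rVert_{F}^{2}$ by some constant, but nothing forces that constant to be smaller than $1-\lVert A(n)\rVert^{2}$. Nor is $\lVert A(n)\rVert=\rho(A(n))$ assumed to be bounded away from $1$ uniformly in $n$: the theorem asks only that the linear symmetric class be DTC, and such classes may well have $\rho(A(n))\to 1$. So the trace inequality simply cannot be made to bite in the stated generality, and ``provided the product is sufficiently small'' is an extra hypothesis, not a deduction. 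The second gap is the reliance on $\bar{u}(n)>0$: Remark~\ref{Remark1} explicitly records that the negative definiteness of $\mathcal{G}(A,F,B)=A^{T}\Psi A-\mathcal{W}^{-1}$ is \emph{unproven} outside the scalar case, so invoking Theorem~\ref{Theo5} here assumes something the paper leaves open.

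The paper's proof sidesteps both obstacles by abandoning the Gramian route entirely and exploiting the structural hypothesis directly. Since $F(n)$ has a bounded number $M_{F}$ of nonzero entries, the bilinear part can be split into single-entry pieces $\hat{F}_{j}x\,v_{j}$; a single-entry $\hat{F}_{j}$ with its nonzero in row $i$ satisfies $\hat{F}_{j}x\,v_{j}=e_{i}\hat{u}_{j}$ for some scalar $\hat{u}_{j}$, so every trajectory of the bilinear system is also a trajectory of a \emph{linear} symmetric network with at most $m+M_{F}$ canonical-vector inputs. That linear network is still DTC because $m+M_{F}$ is bounded independently of $n$. A coarse comparison of the two input energies (using boundedness of the trajectory and of $\lVert F\rVert_{\max}$) then transfers the DTC lower bound back to the original bilinear system. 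This argument never needs $\lVert A\rVert^{2}+\sum_{j}\lVert F_{j}\rVert^{2}<1$, never invokes Theorem~\ref{Theo5}, and as a bonus inherits the exponential growth rate of the linear DTC bound rather than the $1/n$ rate your trace argument would yield.
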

\begin{proof}
  Our proof has two parts. First, we construct a class of bilinear
  control system whose trajectories include the trajectories of the
  systems $(A(n),F(n),B(n))$. Second, we establish a correspondence
  between the constructed bilinear systems and linear control
  networks, and build on it to show that they are difficult to
  control. For the first step, let $\lvert \cdot \rvert_{nz}$ denote
  the number of nonzero entries in a matrix and define $
  M_{F}=\sum_{j=1}^{m}\lvert F_{j}(n)\rvert _{nz}$. Select matrices
  $\hat{F} _{j}(n) $ with $\rvert \hat{F}_{j}(n)\rvert _{nz}=1$ for $j
  \in \until{M_{F}}$ and
  \begin{align*}
    \sum_{j=\lvert F_{i-1}(n)\rvert _{nz} +1}^{\lvert F_{i}(n)\rvert
      _{nz}}\hat{F}_{j}(n)=F_{i}(n) ,
  \end{align*}
  for $i \in \until{m}$, where $\lvert F_{0}(n)\rvert _{nz}\triangleq
  0$ for convenience.  Consider the bilinear system
  \begin{equation}
    x(k+1)=A(n)x(k)+\sum_{j=1}^{M_{F}}\hat{F}_{j}(n)x(k)v_{j}(k)+%
    \sum_{j=1}^{m}B_{j}(n)u_{j}(k),  \label{eq1:Pf_Theo6}
  \end{equation}%
  with state $x(k)\in \mathbb{R}^{n}$, inputs $u_{j}(k),v_{j}(k)\in
  \mathbb{R}$, and system matrices $A(n),\hat{F}_{j}(n)\in \mathbb{R}%
  ^{n\times n}$, $B_{j}(n)\in \mathbb{R}^{n}$. Note that, selecting
  $v_{j}(k) =u_{i}(k)$ for $j=\lvert F_{i-1}(n)\rvert _{nz}+1,\ldots
  ,\lvert F_{i}(n)\rvert _{nz}$ and $i \in \until{m}$
  makes~\eqref{eq1:Pf_Theo6} take the form
  \begin{equation*}
    x(k+1)=A(n)x(k) + \sum_{j=1}^{m} (F_{j}(n)x(k) +
    B_{j}(n))u_{j}(k), 
  \end{equation*}
  which corresponds to the bilinear network $ (A(n),F(n),B(n))$. This
  implies that the optimal control sequence $\{u^{\ast }\}^{\infty }$
  for $(A(n),F(n),B(n))$ generates the same network state trajectory
  as the (not necessarily optimal) control sequence $\{v^{\ast
  },u^{\ast }\}^{\infty }$ for~\eqref{eq1:Pf_Theo6}.

  Our second step establishes that the bilinear network
  in~\eqref{eq1:Pf_Theo6} is difficult to control. Assume that the
  nonzero entry in $\hat{F}_{j}(n)$ is in the $i$-th row. Then, there
  exist $M_{F}$ scalar sequences $\{\hat{u}_{j}(k)\}$ such that for
  all $j \in \until{M_{F}}$ and all $k\in \mathbb{Z}_{\geq 0}$,
  \begin{equation}
    \hat{F}_{j}(n)x(k)v_{j}(k)=\hat{B}_{j}\hat{u}_{j}(k),  \label{eq3:Pf_Theo6}
  \end{equation}%
  where $\hat{B}_{j}=e_{i}$ is the $i$-th canonical unit vector in
  $\mathbb{R}^{n}$. Substituting this into~\eqref{eq1:Pf_Theo6}, we
  obtain
  \begin{equation}
    x(k+1)=A(n)x(k)+\sum_{j=1}^{M_{F}}\hat{B}_{j}\hat{u}_{j}(k)+%
    \sum_{j=1}^{m}B_{j}(n)u_{j}(k),  \label{eq4:Pf_Theo6}
  \end{equation}%
  which is linear, symmetric and difficult to control because its
  number of control nodes is at most $m+M_{F}$, which is
  constant. Furthermore, there exists a constant $\bar{X}\in
  \mathbb{R}_{>0}$, such that for all $i=1,\ldots ,n$, and all $k\in
  \mathbb{Z}_{\geq 0}$, $\rVert x_{i}(k)\rVert \leq \bar{X}$ since the
  state trajectory starts at the origin, the network is stable and the
  input is bounded. As a result,%
  \begin{align*}
    & \sup_{x_{f}\in \mathbb{R}^{n}}\inf_{\underset{\text{equation
          (\ref%
          {eq1:Pf_Theo6})}}{\{u,v\}^{\infty }}}\lVert \{u,v\}^{\infty
    }\rVert
    ^{2}\lVert x_{f}\rVert ^{-2} \\
    & \quad \overset{\text{(\ref{eq3:Pf_Theo6})}}{\geq }\sup_{x_{f}\in
      \mathbb{R}%
      ^{n}:\lVert x_{f}\rVert ^{2}=1}\inf_{\underset{\text{equation
          (\ref%
          {eq4:Pf_Theo6})}}{\{u,\hat{u}\}^{\infty }}}(\lVert
    \{u\}^{\infty }\rVert ^{2}+\rVert F\rVert _{\max
    }^{-2}\bar{X}^{-2}\lVert \{\hat{u}\}^{\infty
    }\rVert ^{2}) \\
    & \quad \geq \min (1,\rVert F\rVert _{\max }^{-2}\bar{X}^{-2}) \\
    & \quad \quad \cdot \sup_{x_{f}\in \mathbb{R}^{n}:\lVert
      x_{f}\rVert ^{2}=1}\inf_{\underset{\text{equation
          (\ref{eq4:Pf_Theo6})}}{\{u,\hat{u}%
        \}^{\infty }}}(\lVert \{u\}^{\infty }\rVert ^{2}+\lVert
    \{\hat{u}\}^{\infty }\rVert ^{2}),
  \end{align*}
  which implies the result.
\end{proof}

Our next result shows that difficult-to-control linear symmetric
networks might remain so even after relaxing the finiteness of the
number of interconnections that can be affected by the addition of
bilinear inputs. More concretely, we study the reachability properties
of the class of networks $(A,F,B)$ with symmetric adjacency matrices
$A=A^{T}$ and $F=\alpha I_{n}$, $|\func{tr} (F)| \leq \mu(n) \in o
(\sqrt{n})$ (without loss of generality, we let $\alpha \geq 0$). For
instance, this corresponds to the case when a central controller can
affect the strengths of the self-loops of all agents simultaneously in
a linear symmetric network or when all agents simultaneously adjust
the strength of their self-loops by the same amount.

\begin{theorem}\longthmtitle{Worst-case control energy for linear symmetric
    networks with self-loop modulation}\label{Theo8}
  Consider the class of bilinear networks given by
  \begin{align*}
    x(k+1) = (A+\alpha v(k)I_{n})x(k) + \sum_{j=1}^{m}B_{j}u_{j}(k) ,
  \end{align*}%
  with $A=A^{T}$, $|%
  \func{tr}(\alpha I_{n})|\leq \mu(n) \in o(\sqrt{n})$ and $\rho
  (A)<\sqrt{1-T_{m}^{-1}}$, where $ T_{m}\triangleq \left\lceil
    \frac{n}{m}\right\rceil -1$. Then the reachability Gramian of the
  network satisfies, for any $ n>m^{-1}\mu^{2}(n)$,
  \begin{equation}
    \lambda _{\min }(\mathcal{W})\leq \frac{(1-T_{m}\alpha ^{2})^{-1}}{1-\rho
      ^{2}(A)-T_{m}^{-1}}\rho ^{2T_{m}}(A).  \label{eq:Thm_8}
  \end{equation}
\end{theorem}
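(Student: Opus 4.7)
My plan is to exploit the very special structure $F=\alpha I_{n}$, which makes the bilinear term in the generalized Lyapunov equation act as a scalar multiple of $\mathcal{W}$. Applying Theorem~\ref{Theo1} with this one bilinear input and $m$ purely linear ones, the equation collapses to
\begin{equation*}
  A\mathcal{W}A^{T}-(1-\alpha^{2})\mathcal{W}+BB^{T}=\mathbf{0}_{n\times n},
\end{equation*}
where $B=[B_{1}\ \cdots\ B_{m}]$. Under the hypotheses, the chain $n\alpha\leq\mu(n)$, $n>m^{-1}\mu^{2}(n)$, $T_{m}\leq n/m$ yields $T_{m}\alpha^{2}\leq \mu^{2}(n)/(mn)<1$, hence $\alpha^{2}\leq T_{m}^{-1}$; combined with $\rho^{2}(A)<1-T_{m}^{-1}$ one obtains $\rho^{2}(A)<1-\alpha^{2}$. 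Thus the rescaled pair $(A/\sqrt{1-\alpha^{2}},B/\sqrt{1-\alpha^{2}})$ is Schur stable and the equation admits the convergent series solution $\mathcal{W}=\sum_{k=0}^{\infty}A^{k}BB^{T}(A^{T})^{k}/(1-\alpha^{2})^{k+1}$.

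Next I will import the standard dimension-counting trick used for difficult-to-control linear symmetric networks (as in~\cite{FP-SZ-FB:14}). Since $T_{m}=\lceil n/m\rceil -1 < n/m$, the columns of $[B\ AB\ \cdots\ A^{T_{m}-1}B]$ span a subspace of dimension at most $mT_{m}<n$, so there exists a unit vector $v\in\mathbb{R}^{n}$ orthogonal to it, killing the first $T_{m}$ summands. Using $A=A^{T}$ (so $(A^{T})^{k}=A^{k}$ and $\|A^{k}\|=\rho^{k}(A)$) and $\|B\|\leq 1$ (the standard canonical-actuator normalization inherited from the linear DTC setting), the tail estimate gives
\begin{equation*}
  v^{T}\mathcal{W}v\leq \sum_{k=T_{m}}^{\infty}\frac{\rho^{2k}(A)}{(1-\alpha^{2})^{k+1}} =\frac{\rho^{2T_{m}}(A)}{(1-\alpha^{2})^{T_{m}}\bigl(1-\alpha^{2}-\rho^{2}(A)\bigr)}.
\end{equation*}

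To match the form of~\eqref{eq:Thm_8} I will apply Bernoulli's inequality $(1-\alpha^{2})^{T_{m}}\geq 1-T_{m}\alpha^{2}$, together with the bound $\alpha^{2}\leq T_{m}^{-1}$, to replace the two factors in the denominator by $(1-T_{m}\alpha^{2})$ and $(1-\rho^{2}(A)-T_{m}^{-1})$; both are strictly positive by the stated hypotheses. The theorem then follows from $\lambda_{\min}(\mathcal{W})\leq v^{T}\mathcal{W}v$ since $\|v\|=1$.

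I do not expect a deep conceptual obstacle: the content of the proof is that $F=\alpha I_{n}$ couples linearly to $\mathcal{W}$, so the bilinear network is essentially as hard to reach as a rescaled linear symmetric one, to which the standard $n/m$-exponential lower bound on the worst-case reachability energy applies. The main care point is bookkeeping: chaining $n\alpha\leq\mu(n)$, $n>m^{-1}\mu^{2}(n)$, $T_{m}<n/m$, and $\rho^{2}(A)<1-T_{m}^{-1}$ so that every denominator that appears—both $1-\alpha^{2}-\rho^{2}(A)$ from the geometric series and $1-\rho^{2}(A)-T_{m}^{-1}$ in the final bound—is certifiably positive.
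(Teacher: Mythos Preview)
Your proof is correct and takes a genuinely different, more elementary route than the paper's. The paper never observes that $F=\alpha I_{n}$ collapses the generalized Lyapunov equation to a standard one; instead it works directly from the series definition~\eqref{eq:reach_Gramian}. It writes $\mathcal{W}_{i}=\alpha^{2(i-1)}\sum_{k_{1},\ldots,k_{i}}A^{\mathrm{sum}(k,i)}BB^{T}(A^{T})^{\mathrm{sum}(k,i)}$, splits each $\mathcal{W}_{i}$ into a piece with $\mathrm{sum}(k,i)<T_{m}$ (whose total is singular, by the same Krylov-dimension argument you use) and a tail $\mathcal{W}_{i,f}$, invokes the Bauer--Fike theorem to get $\lambda_{\min}(\mathcal{W})\leq\sum_{i}\lVert\mathcal{W}_{i,f}\rVert$, then bounds $\lVert\mathcal{W}_{i,f}\rVert$ recursively via a combinatorial count $\eta(N,M)$ of compositions and the crude estimate $\eta(N,M)\leq(M+1)^{N-1}$, finally summing a geometric series in $T_{m}\alpha^{2}$ and citing~\cite[Theorem 3.1]{FP-SZ-FB:14} for $\lVert\mathcal{W}_{1,f}\rVert$.

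Your reduction to the rescaled linear pair $(A/\sqrt{1-\alpha^{2}},B/\sqrt{1-\alpha^{2}})$ bypasses all of the combinatorics and the Bauer--Fike step: you get the closed form $\mathcal{W}=\sum_{k\geq0}A^{k}BB^{T}(A^{T})^{k}/(1-\alpha^{2})^{k+1}$ and a single Rayleigh-quotient tail bound, followed by Bernoulli and $\alpha^{2}<T_{m}^{-1}$. This is cleaner and makes the mechanism transparent---self-loop modulation is literally equivalent to inflating the spectral radius of a linear symmetric network. The paper's approach, by contrast, is more ``generic'' in flavor (it manipulates the $\mathcal{W}_{i}$ hierarchy directly) but gains nothing from that generality here since the argument still depends essentially on $F_{j}$ commuting with $A$. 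Both proofs rely on the same implicit normalization $\lVert BB^{T}\rVert\leq1$ from the canonical-actuator setting of~\cite{FP-SZ-FB:14}; you flag this explicitly, which is good practice.
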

\begin{proof}
  Define $\func{sum}(k,i)\triangleq k_{1}+\cdots +k_{i}$ for $ k\in
  \{\mathbb{Z}_{\geq 0}\}^{\infty}$ and $i\in \mathbb{Z}_{\geq 1}$.
  By definition of the reachability Gramian (\ref{eq:reach_Gramian}), it
  follows that
  \begin{align*}
    \mathcal{W}_{i}& =\alpha ^{2(i-1)}\sum_{k_{1},\cdots ,k_{i}=0}^{\infty }A^{
      \func{sum}(k,i)}BB^{T}(A^{T})^{\func{sum}(k,i)} \\
    & =\mathcal{W}_{i,s}+\mathcal{W}_{i,f},
  \end{align*}
  where
  \begin{align*}
    \mathcal{W}_{i,s}& \triangleq \alpha ^{2(i-1)}\sum_{k_{1},\cdots ,k_{i}=0}^{
      \func{sum}(k,i)<T_{m}}A^{\func{sum}(k,i)}BB^{T}(A^{T})^{\func{sum}(k,i)}, \\
    \mathcal{W}_{i,f}& \triangleq \alpha ^{2(i-1)}\sum_{k_{1},\cdots ,k_{i}=0}^{
      \func{sum}(k,i)\geq T_{m}}A^{\func{sum}(k,i)}BB^{T}(A^{T})^{\func{sum}(k,i)}.
  \end{align*}
  Therefore,
  \begin{align}  \label{eq:lamda_min_self_loop}
    \lambda _{\min }(\mathcal{W})& =\lambda_{\min } \big( \sum_{i=1}^{\infty }
    \mathcal{W}_{i,s}+\sum_{i=1}^{\infty }\mathcal{W}_{i,f} \big) \\
    & \leq \lambda _{\min } \big( \sum_{i=1}^{\infty }\mathcal{W}_{i,s}\big) +
    \sum_{i=1}^{ \infty }\lVert \mathcal{W}_{i,f}\rVert \leq \sum_{i=1}^{\infty
    }\lVert \mathcal{W}_{i,f}\rVert ,  \notag
  \end{align}
  where the first inequality follows from the Bauer-Fike theorem~\cite%
  {RAH-CRJ:85} and the second inequality follows by noting that $%
  \sum_{i=1}^{\infty }\mathcal{W} _{i,s}$ is singular because its column space
  is contained in the space spanned by $\{B,AB,\cdots ,A^{T_{m}-1}B\}$, whose
  dimension is smaller than $n$ by definition of $T_{m}$. We can write $%
  \mathcal{W}_{i,f}$ in a recursive manner as follows,
  \begin{align*}
    \mathcal{W}_{i,f} & =\alpha ^{2(i-1)}\sum_{k_{i}=0}^{\infty
    }A^{k_{i}}\bigl(%
    \sum_{k_{1},\cdots ,k_{i-1}=0}^{\func{sum}(k,i-1)\geq
      T_{m}}A^{\func{sum}%
      (k,i-1)}B \\
    & \quad \quad \cdot
    B^{T}(A^{T})^{\func{sum}(k,i-1)}\bigr)(A^{T})^{k_{i}}
    \\
    & \quad +\alpha
    ^{2(i-1)}\sum_{j=0}^{T_{m}-1}\sum_{k_{i}=T_{m}-j}^{\infty } %
    \bigl(\sum_{k_{1},\cdots
      ,k_{i-1}=0}^{\func{sum}(k,i-1)=j}A^{\func{sum}
      (k,i)}B \\
    & \quad \quad \cdot B^{T}(A^{T})^{\func{sum}(k,i)}\bigr) \\
    & = \alpha ^{2}\sum_{k_{i}=0}^{\infty }A^{k_{i}}\mathcal{W}%
    _{i-1,f}(A^{T})^{k_{i}} \\
    & \quad + \alpha^{2(i-1)} \displaystyle \sum_{j=0}^{T_{m}-1}
    \mathcal{\eta}%
    (i-1,j)\mathcal{W}_{1,f} ,
  \end{align*}
  where $\mathcal{\eta }(N,M)$ is the number of ways of choosing $N
  \in \mathbb{Z}_{\geq 0}$ non-negative integers such that their sum
  equals $M \in \mathbb{Z}_{\geq 0}$. Two properties of this function
  are useful to us: (i) $%
  \mathcal{\eta }(N,M)=\sum_{j=0}^{M}\mathcal{\eta }(N-1,j)$ and (ii)
  $%
  \mathcal{\eta }(N,M)$ is an increasing function of $N$ and
  $M$. Using (i), we obtain 
  \begin{align*}
    \mathcal{W}_{i,f} = \alpha^{2} \! \displaystyle
    \sum_{k_{i}=0}^{\infty
    }A^{k_{i}}\mathcal{W}_{i-1,f}(A^{T})^{k_{i}} \!+\! \alpha
    ^{2(i-1)}\mathcal{%
      \eta }(i,T_{m}-1)\mathcal{W}_{1,f}.
  \end{align*}
  Taking norms and upper bounding, we get
  \begin{align*}
    \lVert \mathcal{W}_{i,f}\rVert & \leq \frac{\alpha ^{2}}{ 1-\lVert A\rVert
      ^{2}}\lVert \mathcal{W}_{i-1,f}\rVert +\alpha ^{2(i-1)} \mathcal{\eta }%
    (i,T_{m}-1)\lVert \mathcal{W}_{1,f}\rVert .
  \end{align*}
  Using this inequality repeatedly, we obtain
  \begin{align*}
    \lVert \mathcal{W}_{i,f}\rVert & \leq \sum_{j=0}^{i-1}\frac{\alpha
      ^{2(i-1)}%
    }{(1-\lVert A\rVert ^{2})^{j}}\mathcal{\eta }(i-j,T_{m}-1)\lVert
    \mathcal{W}%
    _{1,f}\rVert \\
    & \leq (T_{m}\alpha ^{2})^{(i-1)}\lVert \mathcal{W}_{1,f}\rVert
    \sum_{j=0}^{i-1}T_{m}^{-j}(1-\lVert A\rVert ^{2})^{-j}
  \end{align*}
  where we have used $\mathcal{\eta }(N,M) \leq (M+1)\mathcal{\eta
  }(N-1,M) \leq (M+1)^{N-1}\mathcal{\eta }(1,M) =(M+1)^{N-1}$, which
  follows from properties (i) and (ii) of $\eta$ above. Since $A$\ is
  symmetric and Schur stable, $\lVert A\rVert =\rho (A)$, which
  together with $\rho(A)<\sqrt{%
    1-T_{m}^{-1}}$ implies $T_{m}^{-1}(1-\lVert A\rVert ^{2})^{-1}<1$.
  Therefore, we conclude
  \begin{align}
    \lVert \mathcal{W}_{i,f}\rVert & \leq \frac{(T_{m}\alpha ^{2})^{(i-1)}\lVert
      \mathcal{W}_{1,f}\rVert }{ 1-T_{m}^{-1}(1-\lVert A\rVert ^{2})^{-1}} .
    \label{eq:W_if_norm}
  \end{align}
  Combining (\ref{eq:lamda_min_self_loop}) with (\ref{eq:W_if_norm}), we
  obtain
  \begin{align*}
    \lambda _{\min }(\mathcal{W})& \leq \sum_{i=1}^{\infty }\frac{(T_{m}\alpha
      ^{2})^{(i-1)}\lVert \mathcal{W}_{1,f}\rVert }{1-T_{m}^{-1}(1-\rho
      ^{2}(A))^{-1}} \\
    & =\frac{(1-T_{m}\alpha ^{2})^{-1}}{1-T_{m}^{-1}(1-\rho ^{2}(A))^{-1}}\lVert
    \mathcal{W}_{1,f}\rVert ,
  \end{align*}
  where we have used the fact that $|\func{tr}(F)|\leq \mu(n)$ implies
  that $%
  T_{m}\alpha ^{2}<1$ for $n>m^{-1}\mu^{2}(n)$.  Using \cite[Theorem
  3.1]{FP-SZ-FB:14}, we obtain
  \begin{align*}
    \lambda _{\min }(\mathcal{W}) &\leq \frac{(1-T_{m}\alpha
      ^{2})^{-1}}{ 1-T_{m}^{-1}(1-\rho ^{2}(A))^{-1}}\frac{\rho
      ^{2T_{m}}(A)}{1-\rho ^{2}(A)}
    \\
    &=\frac{(1-T_{m}\alpha ^{2})^{-1}}{1-\rho ^{2}(A)-T_{m}^{-1}}\rho
    ^{2T_{m}}(A),
  \end{align*}
  and the proof is complete.
\end{proof}

Note that, for a large-scale network with a fixed number of control
nodes, the assumption that $\rho (A)<\sqrt{1-T_{m}^{-1}}$ in
Theorem~\ref{Theo8} is not restrictive because $\sqrt{ 1-T_{m}^{-1}}$
becomes arbitrarily close to $ 1$ as $n$ increases.  One can show that
$\frac{%
  (1-T_{m}\alpha ^{2})^{-1}}{1-\rho ^{2}(A)-T_{m}^{-1}}$
in~\eqref{eq:Thm_8} is a decreasing function of $n$ and that
\begin{equation*}
  \lim_{n\rightarrow \infty }\frac{(1-T_{m}\alpha ^{2})^{-1}}{1-\rho
    ^{2}(A)-T_{m}^{-1}}=(1-\rho ^{2}(A))^{-1}.
\end{equation*}%
Thus, $\lambda _{\min }(\mathcal{W})$ decreases at least exponentially
as $n$ increases, which means the worst-case control energy increases
exponentially, as indicated by Theorem~%
\ref{Theo5}.  Therefore Theorem~\ref{Theo8} can be interpreted as
saying that bounded homogeneous self-loop modulation through bilinear
inputs does not make a linear symmetric network easier to control.


We illustrate the result in Theorem~\ref{Theo8} with an example.
%
%
\begin{example}\longthmtitle{Line network with self-loop
    modulation}\label{Example5}
  Consider the group of line networks for $n \in \until{15}$ with
  adjacency matrices $A=[a_{ij}]$, where $%
  a_{ij}=0.25 $ if $|i-j|\leq 1$ and $a_{ij}=0$ otherwise for $i,j \in
  \QTR{until}{n}$. Let $m=3$, with $B_{1}, B_{2}, B_{3}$ being
  canonical vectors chosen optimally using exhaustive search to
  maximize $\lambda _{\min }(\mathcal{W})$, and let $|\func{tr}(\alpha
  I_n)|=0.9$. The minimum eigenvalue of the reachability Gramian is
  plotted in a logarithmic scale in Figure \ref%
  {Fig2} as a function of $n$. It can be seen that $\lambda _{\min
  }(\mathcal{W})$ decreases exponentially as $n$ increases, which
  implies that the worst-case control energy increases exponentially
  with the scale of the network, even with self-loop modulation.
  \begin{figure}[tbh]
    \centering
    \includegraphics[scale=0.5]{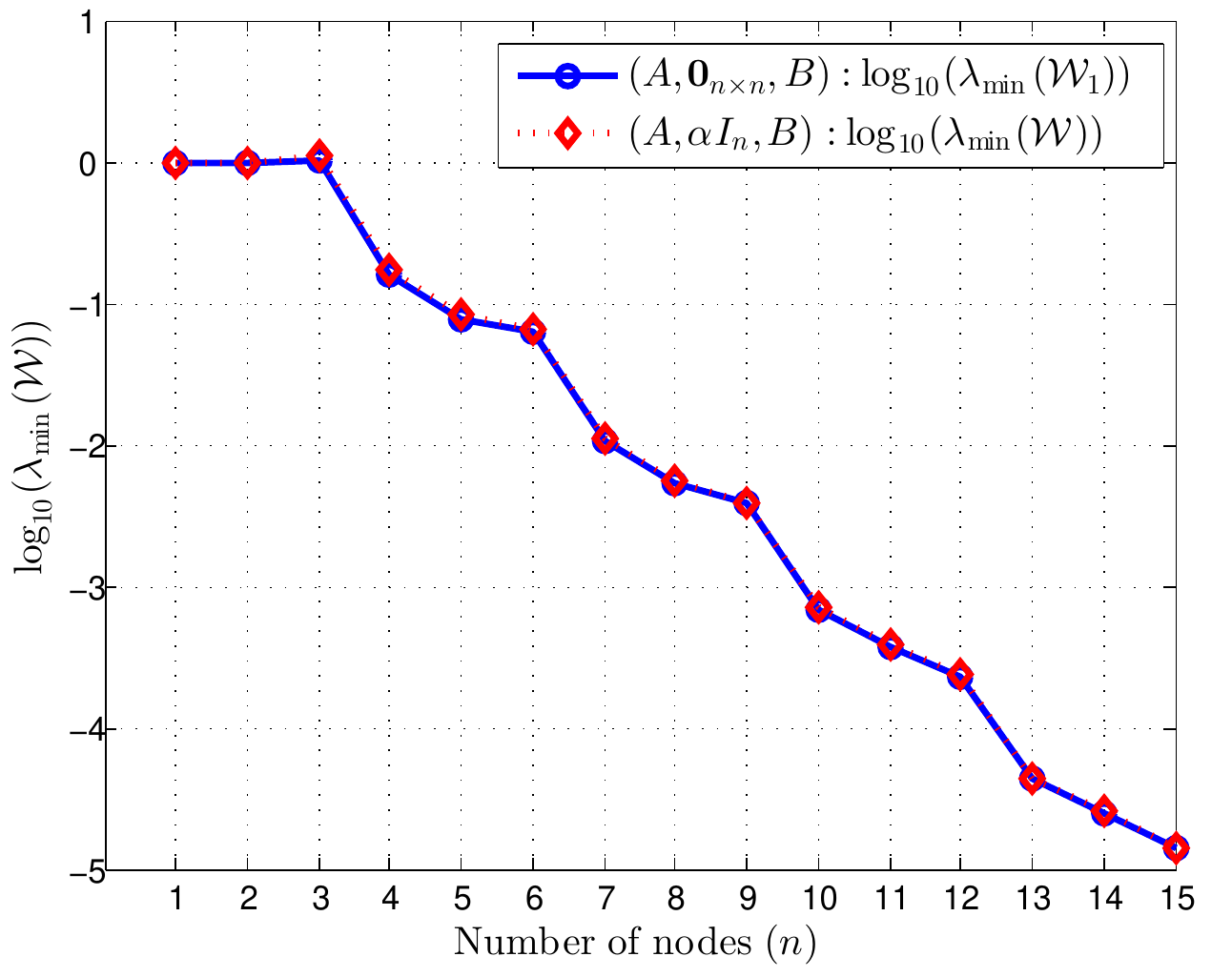}
    \caption{For the class of line networks described in Example
      \protect\ref%
      {Example5}, $\log _{10}(\protect\lambda_{\min}(\mathcal{W}))$ is
      plotted in red diamonds as the scale $n$ of the network
      increases from $1$ to $15$. The same quantity is also plotted in
      blue circles for the case without self-loop modulation
      ($F=\mathbf{0}_{n\times n}$). As predicted by Theorem
      \protect\ref%
      {Theo8}, symmetric networks with or without self-loop modulation
      are difficult to control with a fixed number of control
      nodes.}\label{Fig2}
  \end{figure}
  \relax\ifmmode\else\unskip\hfill\fi\hbox{$\bullet$}
\end{example}

We conclude this section with an example that shows that a
difficult-to-control linear network can be made easy to control by
adding a single bilinear input that affects an infinite number of
interconnections with strength that is independent of the scale of the
network.

\begin{example}\longthmtitle{Linear symmetric line network with $n$-dependent
    interconnection modulation}\label{Example7}
  Consider the group of bilinear networks $(A(n),F(n),B(n))$ with
  \begin{eqnarray*}
    A(n) &=&\left[
      \begin{array}{ccccc}
        0.05 & 0.05 & 0 & \cdots & 0 \\
        0.05 & 0.05 & 0.05 & \ddots & \vdots \\
        0 & 0.05 & \ddots & \ddots & 0 \\
        \vdots & \ddots & \ddots & \ddots & 0.05 \\
        0 & \cdots & 0 & 0.05 & 0.05%
      \end{array}%
    \right] \in \mathbb{R}^{n\times n}, \\
    B(n) &=&\left[
      \begin{array}{cccc}
        1 & 0 & \cdots & 0%
      \end{array}%
    \right] ^{T}\in \mathbb{R}^{n},
  \end{eqnarray*}%
  and $F(n)=[f_{ij}]$ with $f_{ij}=1$ if $i=j+1$ and all the other
  entries $0$.  Figure \ref{Fig4} compares $\lambda _{\min
  }(\mathcal{W}_{1}\mathcal{)}$ of the linear line network
  $(A(n),\mathbf{0}_{n\times nm},B(n))$ with $\lambda _{\min }(%
  \mathcal{W})$ of the bilinear network $(A(n),F(n),B(n))$. One can
  see that $%
  \lambda _{\min }(\mathcal{W}_{1}\mathcal{)}$ decreases exponentially
  as the scale $n$ of the network increases, which implies that the
  linear network is difficult to control. By employing the bilinear
  control through $F(n)$, $ \lambda _{\min }(\mathcal{W)}$ is kept
  constant as $n$ increases. Note that the number of interconnections
  we need to modulate increases with $n$.
  \begin{figure}[tbh]
    \centering
    \includegraphics[scale=0.5]{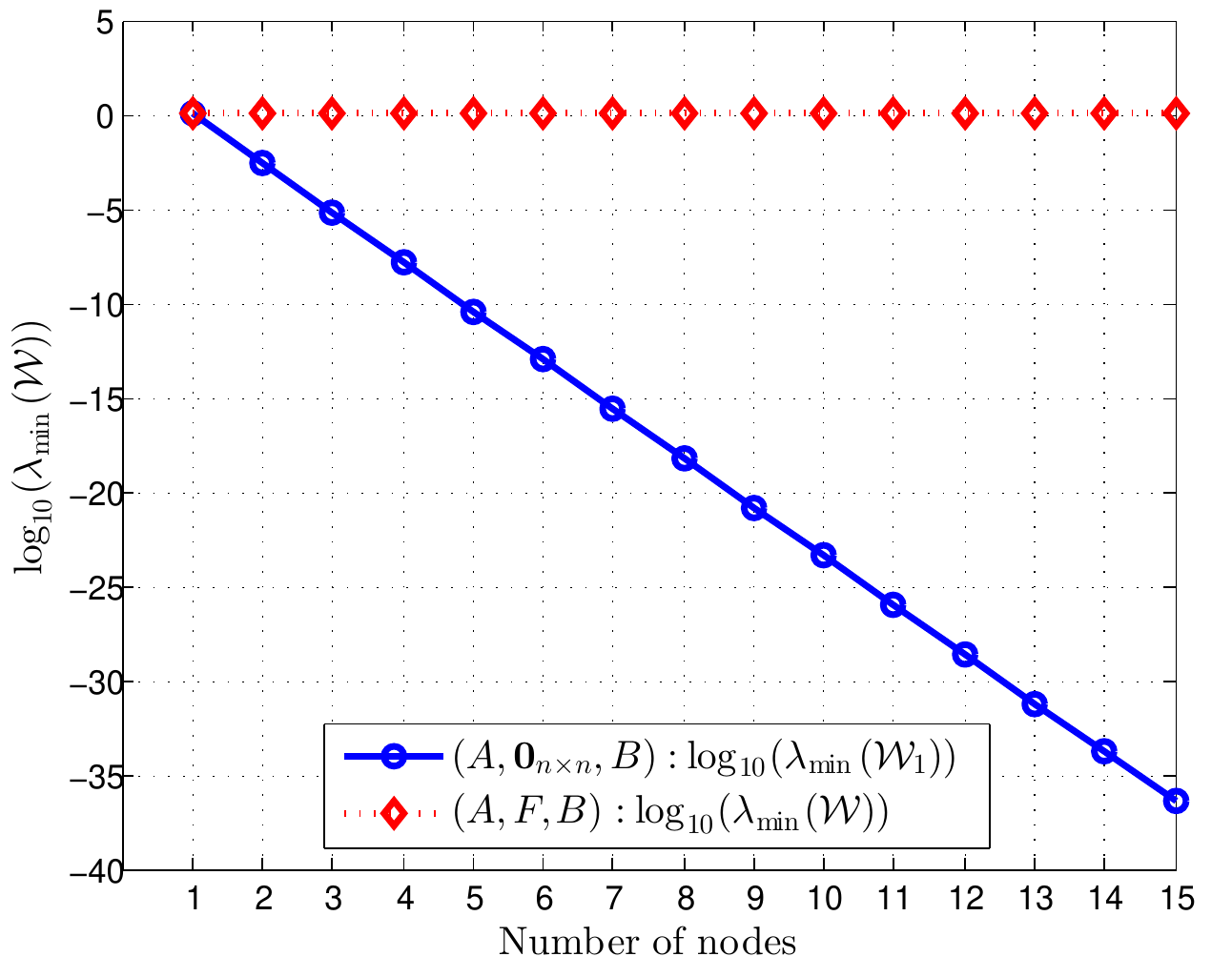}
    \caption{The class of linear networks $(A,\mathbf{0}_{n\times
        nm},B)$ are difficult to control while the corresponding
      bilinear networks $(A,F,B)$ are easy to control.}\label{Fig4}
    \vspace*{-1.5ex}
  \end{figure}
  \relax\ifmmode\else\unskip\hfill \fi\hbox{$\bullet$}
\end{example}

\section{Conclusions}\label{Sec:con}
We have proposed Gramian-based reachability metrics for discrete-time
bilinear control networks to quantify the input energy required to
steer the state from the origin to an arbitrary point. Our
reachability notions build on the fact that, when the infinity norm of
the input is upper bounded by some function of the system matrices,
then the required minimum input energy can be lower bounded in terms
of the reachability Gramian.  We have studied the supermodularity
properties of Gramian as a function of the actuators and derived lower
bounds on the reachability metrics in terms of the aggregate
contribution of the individual actuators.  Finally, we have studied
the effect that the addition of bilinear inputs has on the
difficult-to-control character of linear symmetric networks.  Future
work will include the design of algorithms for optimal selection of
control nodes in complex networks, where both the nodes and the
interconnection strength among neighboring nodes can be affected by
actuators, the study of the more general problem of steering the
network state from an arbitrary initial condition to an arbitrary
target state, and the analysis of observability metrics for bilinear
control systems based on the generalized observability Gramian.

\section*{Acknowledgments}
The authors would like to thank the anonymous reviewers for comments
that help improve the readability of the paper.  This work was
partially supported by NSF Award CNS-1329619.

\bibliographystyle{IEEEtran}%
\bibliography{alias,Main,Main-add,JC}

\begin{thebibliography}{10}
\providecommand{\url}[1]{#1}
\csname url@samestyle\endcsname
\providecommand{\newblock}{\relax}
\providecommand{\bibinfo}[2]{#2}
\providecommand{\BIBentrySTDinterwordspacing}{\spaceskip=0pt\relax}
\providecommand{\BIBentryALTinterwordstretchfactor}{4}
\providecommand{\BIBentryALTinterwordspacing}{\spaceskip=\fontdimen2\font plus
\BIBentryALTinterwordstretchfactor\fontdimen3\font minus
  \fontdimen4\font\relax}
\providecommand{\BIBforeignlanguage}[2]{{%
\expandafter\ifx\csname l@#1\endcsname\relax
\typeout{** WARNING: IEEEtran.bst: No hyphenation pattern has been}%
\typeout{** loaded for the language `#1'. Using the pattern for}%
\typeout{** the default language instead.}%
\else
\language=\csname l@#1\endcsname
\fi
#2}}
\providecommand{\BIBdecl}{\relax}
\BIBdecl

\bibitem{YZ-JC:15-cdc}
Y.~Zhao and J.~Cort\'{e}s, ``Reachability metrics for bilinear complex
  networks,'' in \emph{{IEEE} Conf.\ on Decision and Control}, Osaka, Japan,
  2015, pp. 4788--4793.

\bibitem{YYL-JJS-ALB:11}
Y.~Y. Liu, J.~J. Slotine, and A.~L. Barab{\'a}si, ``Controllability of complex
  networks,'' \emph{Nature}, vol. 473, no. 7346, pp. 167--173, 2011.

\bibitem{AO:14}
A.~Olshevsky, ``Minimal controllability problems,'' \emph{IEEE Transactions on
  Control of Network Systems}, vol.~1, no.~4, pp. 249--258, 2014.

\bibitem{AR-MJ-MM-ME:09}
A.~Rahmani, M.~Ji, M.~Mesbahi, and M.~Egerstedt, ``Controllability of
  multi-agent systems from a graph-theoretic perspective,'' \emph{SIAM Journal
  on Control and Optimization}, vol.~48, no.~1, pp. 162--186, 2009.

\bibitem{CA-BG:14}
C.~Aguilar and B.~Gharesifard, ``Necessary conditions for controllability of
  nonlinear networked control systems,'' in \emph{{A}merican {C}ontrol
  {C}onference}, Portland, OR, USA, 2014, pp. 5379--5383.

\bibitem{GY-JR-YL-CL-BL:12}
G.~Yan, J.~Ren, Y.~Lai, C.~Lai, and B.~Li, ``Controlling complex networks: How
  much energy is needed?'' \emph{Physical Review Letters}, vol. 108, no.~21, p.
  218703, 2012.

\bibitem{FP-SZ-FB:14}
F.~Pasqualetti, S.~Zampieri, and F.~Bullo, ``Controllability metrics,
  limitations and algorithms for complex networks,'' \emph{IEEE Transactions on
  Control of Network Systems}, vol.~1, no.~1, pp. 40--52, 2014.

\bibitem{THS-JL:14}
T.~Summers and J.~Lygeros, ``Optimal sensor and actuator placement in complex
  dynamical networks,'' in \emph{World Congress}, vol.~19, no.~1, 2014, pp.
  3784--3789.

\bibitem{VT-MAR-GJP-AJ:15}
V.~Tzoumas, M.~A. Rahimian, G.~J. Pappas, and A.~Jadbabaie, ``Minimal actuator
  placement with optimal control constraints,'' \emph{arXiv preprint
  arXiv:1503.04693}, 2015.

\bibitem{KJF-LH-WP:03}
K.~J. Friston, L.~Harrison, and W.~Penny, ``Dynamic causal modelling,''
  \emph{NeuroImage}, vol.~19, pp. 1273--1302, 2003.

\bibitem{JRI-AO-TM-MP-JS-GC-HP:14}
J.~R. Iversen, A.~Ojeda, T.~Mullen, M.~Plank, J.~Snider, G.~Cauwenberghs, and
  H.~Poizner, ``Causal analysis of cortical networks involved in reaching to
  spatial targets,'' in \emph{Annual Int. Conf. of the IEEE Engineering in
  Medicine and Biology Society}, Chicago, IL, 2014, pp. 4399--4402.

\bibitem{CB-GD-GK:74}
C.~Bruni, G.~Dipillo, and G.~Koch, ``Bilinear systems: An appealing class of
  "nearly linear" systems in theory and applications,'' \emph{IEEE Transactions
  on Automatic Control}, vol.~19, no.~4, pp. 334--348, 1974.

\bibitem{DE:09}
D.~Elliott, \emph{Bilinear Control Systems: Matrices in Action}.\hskip 1em plus
  0.5em minus 0.4em\relax Springer Science \& Business Media, 2009, vol. 169.

\bibitem{PP-VY:10}
P.~Pardalos and V.~Yatsenko, \emph{Optimization and Control of Bilinear
  Systems: Theory, Algorithms, and Applications}.\hskip 1em plus 0.5em minus
  0.4em\relax Springer Science \& Business Media, 2010, vol.~11.

\bibitem{DK-KN:85}
D.~Koditschek and K.~Narendra, ``The controllability of planar bilinear
  systems,'' \emph{IEEE Transactions on Automatic Control}, vol.~30, no.~1, pp.
  87--89, 1985.

\bibitem{UP-PF:92}
U.~Piechottka and P.~Frank, ``Controllability of bilinear systems,''
  \emph{Automatica}, vol.~28, no.~5, pp. 1043--1045, 1992.

\bibitem{TG-TT-JZ:73}
T.~Goka, T.~Tarn, and J.~Zaborszky, ``On the controllability of a class of
  discrete bilinear systems,'' \emph{Automatica}, vol.~9, no.~5, pp. 615--622,
  1973.

\bibitem{LT-KC:11}
L.~Tie and K.~Cai, ``On near-controllability and stabilizability of a class of
  discrete-time bilinear systems,'' \emph{Systems \& Control Letters}, vol.~60,
  no.~8, pp. 650--657, 2011.

\bibitem{WG-JM:98}
W.~Gray and J.~Mesko, ``Energy functions and algebraic {G}ramians for bilinear
  systems,'' in \emph{Preprints of the 4th IFAC Nonlinear Control Systems
  Design Symposium}, 1998, pp. 103--108.

\bibitem{EV:08}
E.~Verriest, ``Time variant balancing and nonlinear balanced realizations,'' in
  \emph{Model Order Reduction: Theory, Research Aspects and Applications},
  2008, pp. 213--250.

\bibitem{PB-TD:11}
P.~Benner and T.~Damm, ``Lyapunov equations, energy functionals, and model
  order reduction of bilinear and stochastic systems,'' \emph{SIAM Journal on
  Control and Optimization}, vol.~49, no.~2, pp. 686--711, 2011.

\bibitem{LZ-JL-BH-GY:03}
L.~Zhang, J.~Lam, B.~Huang, and G.~Yang, ``On {G}ramians and balanced
  truncation of discrete-time bilinear systems,'' \emph{International Journal
  of Control}, vol.~76, no.~4, pp. 414--427, 2003.

\bibitem{SB-MB-US:94}
S.~AL-Baiyat, M.~Bettayeb, and U.~AL-Saggaf, ``New model reduction scheme for
  bilinear systems,'' \emph{International Journal of Systems Science}, vol.~25,
  no.~10, pp. 1631--1642, 1994.

\bibitem{WG-EV:06}
W.~Gray and E.~Verriest, ``Algebraically defined {G}ramians for nonlinear
  systems,'' in \emph{45th IEEE Conference on Decision and Control}, 2006, pp.
  3730--3735.

\bibitem{LZ-JL:02}
L.~Zhang and J.~Lam, ``On $\protect{H}_{2}$ model reduction of bilinear
  systems,'' \emph{Automatica}, vol.~38, no.~2, pp. 205--216, 2002.

\bibitem{PB-TB-TD:11}
P.~Benner, T.~Breiten, and T.~Damm, ``Generalised tangential interpolation for
  model reduction of discrete-time {MIMO} bilinear systems,''
  \emph{International Journal of Control}, vol.~84, no.~8, pp. 1398--1407,
  2011.

\bibitem{MP-RW-JL:13}
M.~Petreczky, R.~Wisniewski, and J.~Leth, ``Balanced truncation for linear
  switched systems,'' \emph{Nonlinear Analysis: Hybrid Systems}, vol.~10, pp.
  4--20, 2013.

\bibitem{TK:80}
T.~Kailath, \emph{Linear Systems}.\hskip 1em plus 0.5em minus 0.4em\relax
  {Englewood Cliffs, New Jersey}: Prentice-Hall, 1980.

\bibitem{RGA-EIJ:71}
R.~G. Agniel and E.~I. Jury, ``Almost sure boundedness of randomly sampled
  systems,'' \emph{SIAM Journal on Control}, vol.~9, no.~3, pp. 372--384, 1971.

\bibitem{AH-RES:87}
A.~Hotz and R.~E. Skelton, ``Covariance control theory,'' \emph{International
  Journal of Control}, vol.~46, no.~1, pp. 13--32, 1987.

\bibitem{RES-SMK-EY:91}
R.~E. Skelton, S.~M. Kherat, and E.~Yaz, ``Covariance control of discrete
  stochastic bilinear systems,'' in \emph{{A}merican {C}ontrol {C}onference},
  Boston, MA, USA, 1991, pp. 2660--2664.

\bibitem{ZA-ZG:94}
Z.~Aganovic and Z.~Gajic, ``The successive approximation procedure for
  finite-time optimal control of bilinear systems,'' \emph{IEEE Transactions on
  Automatic Control}, vol.~39, no.~9, pp. 1932--1935, 1994.

\bibitem{GYT-HM-BLZ:05}
G.~Y. Tang, H.~Ma, and B.~L. Zhang, ``Successive-approximation approach of
  optimal control for bilinear discrete-time systems,'' in \emph{IEE
  Proceedings-Control Theory and Applications}, vol. 152, no.~6, 2005, pp.
  639--644.

\bibitem{EH-BT:88}
E.~Hofer and B.~Tibken, ``An iterative method for the finite-time
  bilinear-quadratic control problem,'' \emph{Journal of Optimization Theory
  and Applications}, vol.~57, no.~3, pp. 411--427, 1988.

\bibitem{SB-LV:09}
S.~Boyd and L.~Vandenberghe, \emph{Convex Optimization}.\hskip 1em plus 0.5em
  minus 0.4em\relax Cambridge University Press, 2009.

\bibitem{TH-SM:84}
T.~Hinamoto and S.~Maekawa, ``Approximation of polynomial state-affine
  discrete-time systems,'' \emph{IEEE Transactions on Circuits and Systems},
  vol.~31, no.~8, pp. 713--721, 1984.

\bibitem{GG-BS:89}
G.~Gallo and B.~Simeone, ``On the supermodular knapsack problem,''
  \emph{Mathematical Programming}, vol.~45, no. 1-3, pp. 295--309, 1989.

\bibitem{RAH-CRJ:85}
R.~A. Horn and C.~R. Johnson, \emph{Matrix Analysis}.\hskip 1em plus 0.5em
  minus 0.4em\relax Cambridge University Press, 1985.

\end{thebibliography}
%
%

\end{document}